\def\ba{\begin{equation}}
	\def\ea{\end{equation}}
\def\bea{\begin{eqnarray}}
	\def\eea{\end{eqnarray}}
\def\ben{\begin{equation*}}
	\def\een{\end{equation*}}
\def\bean{\begin{eqnarray*}}
	\def\eean{\end{eqnarray*}}
\def\bma{\begin{mathletters}}
	\def\ema{\end{mathletters}}
\def\bi{\begin{itemize}}
	\def\ei{\end{itemize}}
\newcommand{\be}{\begin{equation}}
	\newcommand{\ee}{\end{equation}}
\newcommand{\kommentar}[1]{}
\newcommand{\forget}[1]{}
\newtheorem{theorem}{Theorem}
\newtheorem{definition}{Definition}
\begin{document}
	
	\title{Hidden Non $n$-locality In Linear Networks}
	\author{Kaushiki Mukherjee}
	\email{kaushiki.wbes@gmail.com}
	\affiliation{Department of Mathematics, Government Girls' General Degree College, Ekbalpore, Kolkata-700023, India.}
	\author{Soma Mandal}
	\email{soma2778.wbes@gmail.com}
	\affiliation{Department of Physics, Government Girls' General Degree College, Ekbalpore, Kolkata-700023, India.}
	\author{Tapaswini Patro}
	\email{p20190037@hyderabad.bits-pilani.ac.in}
	\affiliation{Department of Mathematics, Birla Institute of Technology and Science Pilani, Hyderabad Campus,Telangana-500078, India}
	\author{Nirman Ganguly}
	\email{nirmanganguly@hyderabad.bits-pilani.ac.in}
	\affiliation{Department of Mathematics, Birla Institute of Technology and Science Pilani, Hyderabad Campus,Telangana-500078, India}
	
	
	\begin{abstract}
		We study hidden nonlocality in a linear network with independent sources. In the usual paradigm of Bell nonlocality, there are certain states which exhibit nonlocality only after the application of suitable local filtering operations, which in turn are some special stochastic local operations assisted with classical communication (SLOCC). In the present work, we introduce the notion of hidden non $n$-locality. The notion is detailed using a bilocal network. We provide instances of hidden non bilocality and  non trilocality,  where we notice quite intriguingly that non bilocality is observed even when one of the sources distributes a mixed two-qubit separable state. Furthermore a characterization of hidden non bilocality is also provided in terms of the Bloch-Fano decomposition, wherein we conjecture that to witness hidden non bilocality, one of the two states (used by the sources) must have non-null local Bloch vectors. Noise is inevitable in practical scenarios, which makes it imperative to study any possible method to enhance possibility of detecting non-classicality in the presence of noise in the network. We find that local filtering enhances the robustness to noise, which we demonstrate using bit flip and amplitude damping channels.
	\end{abstract}
	\date{\today}
	\maketitle
	
	
	\section{Introduction}\label{intro}
	The study on correlations unachievable within the classical realm has both foundational \cite{Bel} and pragmatic \cite{P.Zoller} implications. Bell nonlocality \cite{Bel,brunrev} constitutes one of the most profound correlations that a quantum state has to offer. The fact that measurements done by spatially separated parties give rise to correlations that cannot be explained by local hidden variables, is the mainstay of such nonlocal correlations \cite{Bel}. Correlations that do not admit a local hidden variable (LHV) description will hence violate a suitably chosen Bell's inequality \cite{Bel}. Thus the violation of Bell's inequality bears the signature of Bell nonlocality. Apart from foundational interest, Bell nonlocality also plays significant roles in practical tasks like device-independent quantum cryptography \cite{A.Acin} and random number generation \cite{random}.
	
	In a standard $(n,m,k)$ measurement scenario, each of $n$ parties sharing a given state repeatedly makes a random and independent choice of one measurement from a collection of $m$ measurements which are each $k$-valued. It is then checked whether the correlations generated therein violate Bell's inequality. Violation of at least one Bell's inequality thus guarantees the nonlocal nature of such correlations. Entanglement is considered a necessity for the violation of Bell's inequalities. However, there are several states, which although entangled, do not violate any Bell's inequality \cite{brunrev,hir1}. Some of those states violate Bell's inequality when subjected to sequential measurements. In such a sequential measurement scenario, the measurements are applied in multiple stages. Initially, the parties are allowed to perform local operations assisted with classical communication (LOCC). In the final step, the parties perform local measurements as in the usual $(n,m,k)$ scenario.
	
	Speaking of sequential measurements, the application of local filtering operations followed by local measurements deserves special mention in the context of Bell nonlocality. Local filtering operations constitute an important class of SLOCC (Stochastic Local Operations and Classical Communication \cite{hir1,hor}). Any state which violates Bell's inequality after being subjected to suitable filtering operations is said to exhibit hidden nonlocality \cite{pop1,gis2}. Over the years, multiple probes have observed various instances of hidden nonlocality \cite{pop1,gis2,hir2,bp1}. In \cite{pop1,gis2}, the authors have given instances of Bell-CHSH \cite{J.F.Clauser} local \cite{brunrev} entangled states which exhibit hidden nonlocality when subjected to suitable local filters. In \cite{hir2}, the authors have shown that even states admitting a LHV model can generate hidden nonlocality under a suitable measurement context. In a broader sense, the present work characterizes hidden nonlocality in the purview of a linear network (which we briefly state below with the details given in section \ref{nlocal1}).
	
	\par In the last decade, the study of nonlocality has been extended beyond the usual paradigm of a Bell scenario to accommodate and analyze network correlations arising in different experimental setups involving multiple independent sources \cite{frtz1,BRA,BRAN,km1,gis1,internet1, lee,internet2}. Network scenarios, characterized by source independence (\textit{$n$-local}) assumption are commonly known as \textit{$n$-local networks} \cite{km1}. In such scenarios, each of the sources sends particles to a subset of distant parties forming the network. Owing to $n$-local assumption, some novel quantum correlations are observed in a network that are not witnessed in the standard Bell scenario \cite{gis1,bilo1}. For example, non-classical correlations (non $n$-local correlations) are generated across the entire network even though all the parties do not share any common past. Moreover, in the measurement scenario associated with a network, some or all the parties perform a fixed measurement. This is also in contrast to the standard Bell scenario, where the random and free choice of inputs by each party is crucial to demonstrate Bell nonlocality.
	
	Different research activities have been conducted which provide for the characterization of quantum correlations in $n$-local networks \cite{BRA,BRAN,km1,km2,bilo2,bilo3,km3,bilo4,km4,km5,bilo5,ejm,bilo6,bilo7,nr4,nr1,nr2,birev,nr3,km7}. Much like the usual Bell nonlocality experiments, violation of an $n$-local inequality indicates the presence of $n$-nonlocal correlations. However, when a particular $n$-local inequality is satisfied, we remain inconclusive. It has been shown that in a network if each source distributes a two-qubit pure entangled state then a violation is observed \cite{bilo2}. The same conclusion does not hold in case the source generates some mixed entangled states. The $n$-local inequality fails to capture nonlocality even though there may be some non $n$-local correlations. Thus, it becomes imperative to probe whether local filtering operations can reveal hidden non $n$-local correlations. The present work addresses this question.
	
	In this work, we introduce the notion of \textit{hidden non $n$-locality}. We analyze the nature of quantum correlations in a $n$-local network where at least one party performs local filtering operations after the distribution of qubits by the sources. For a detailed discussion, we consider the simplest $n$-local network, namely a bilocal network ($n$$=$$2$ \cite{BRAN}). We then characterize the set of hidden nonbilocal correlations. The characterization is also given in terms of the Bloch-Fano decomposition. It is observed that to witness hidden non bilocality in a network, at least one of the two states must have non-null local Bloch vectors, which we state as a conjecture. Interestingly, hidden non bilocality is detected even when one of the sources distributes a two-qubit mixed separable state. Environmental noise is ubiquitous in any implementation of quantum information processing protocols. In this context, it is thus important to explore ways which can enhance the detection of non $n$-local correlations. We find that appropriately chosen local filters are effective in this scenario. We demonstrate this phenomenon using bit flip and amplitude damping channels.

	\par Rest of the work is organized in the following manner:  In sec.\ref{pres}, we briefly discuss the prerequisites for our work. In sec.\ref{ress1}, we have discussed the $n$-local network scenario where now the parties may perform filtering operations thereby introducing the notion of hidden non $n$-locality. Hidden non $n$-local correlations is then analyzed in sec.\ref{ress3}. Characterization of hidden non $n$-locality in terms of Bloch parameters is provided next in sec.\ref{ex3}. Utility of filtering operations in increasing bilocal inequality's robustness to noise is discussed with few examples in sec.\ref{noisy}. We then summarize our work with a discussion on possible future courses of work.
	
	\section{Preliminaries}\label{pres}
	
	\subsection{Bloch-Fano Decomposition of a Density Matrix}
	Let $\rho$ denote an arbitrary two-qubit state. In the Bloch-Fano decomposition $\rho$ is given as:
	\begin{equation}\label{st4}
		\small{\rho}=\small{\frac{1}{4}(\mathbb{I}_{2}\times\mathbb{I}_2+\vec{a}.\vec{\sigma}\otimes \mathbb{I}_2+\mathbb{I}_2\otimes \vec{b}.\vec{\sigma}+\sum_{j_1,j_2=1}^{3}\mathfrak{w}_{j_1j_2}\sigma_{j_1}\otimes\sigma_{j_2})},
	\end{equation}
	where $\vec{\sigma}$$=$$(\sigma_1,\sigma_2,\sigma_3), $ $\sigma_{j_k}$ stand for Pauli operators along three mutually perpendicular directions ($j_k$$=$$1,2,3$). $\vec{a}$$=$$(a_1,a_2,a_3)$ and $\vec{b}$$=$$(b_1,b_2,b_3)$ denote
	local bloch vectors ($\vec{a},\vec{b} $$\in$$\mathbb{R}^3$) corresponding to the party Alice ($A$) and Bob ($B$) respectively with $|\vec{a}|,|\vec{b}|$$\leq$$1$ and $(\mathfrak{w}_{i,j})_{3\times3}$ denotes correlation tensor $\mathcal{W}$ (real).
	Matrix elements $\mathfrak{w}_{j_1j_2}$ are given by $\mathfrak{w}_{j_1j_2}$$=$$\textmd{Tr}[\rho\,\sigma_{j_1}\otimes\sigma_{j_2}].$ \\
	$\mathcal{W}$ can be diagonalized by subjecting it to suitable local unitary operations \cite{gam,luo}. The transformed state is then given by:
	\begin{equation}\label{st41}
		\small{\rho}^{'}=\small{\frac{1}{4}(\mathbb{I}_{2}\times\mathbb{I}_2+\vec{u}.\vec{\sigma}\otimes \mathbb{I}_2+\mathbb{I}_2\otimes \vec{z}.\vec{\sigma}+\sum_{j=1}^{3}s_{j}\sigma_{j}\otimes\sigma_{j})},
	\end{equation}
	$T$$=$$\textmd{diag}(s_{1},s_{2},s_{3})$ denote the correlation matrix in Eq.(\ref{st41}) where $s_{1},s_{2},s_{3}$ are the eigen values of $\sqrt{\mathcal{W}^{T}\mathcal{W}},$ i.e., singular values of $\mathcal{W}.$ It is important to note here such local unitary transforms do not affect the nonlocality exhibited by the state.
	\subsection{Linear $n$-local Networks}\label{nlocal1}
	Here we give a brief overview of linear $n$-local networks \cite{km1}. Let us consider a linear network arrangement of $n$ sources $\textbf{S}_1,\textbf{S}_2,...\textbf{S}_n$ and $n+1$ parties $\textbf{A}_1,\textbf{A}_2,...,\textbf{A}_{n+1}$ (see Fig.\ref{fig1}). $\forall i$$=$$1,2,...,n,$ source $\textbf{S}_i$ independently sends physical systems to $\textbf{A}_i$ and $\textbf{A}_{i+1}.$ Each of $ \textbf{A}_2,\textbf{A}_3,...,\textbf{A}_n $ receives two particles and is referred to as \textit{central} parties. Other two parties $\textbf{A}_1$ and $\textbf{A}_{n+1}$ are referred to as \textit{extreme} parties. Each of the extreme parties receives one particle. Each of the sources $\textbf{S}_i$ is characterized by variable $\lambda_i.$ The sources being independent, joint distribution of the variables $\lambda_1,...,\lambda_n$ is factorizable:
	\begin{equation}\label{tr1}
		q(\lambda_1,...\lambda_n)=\Pi_{i=1}^n q_i(\lambda_i),
	\end{equation}
	where $\forall i,\,q_i$ denotes the normalized distribution of $\lambda_i.$ Eq.(\ref{tr1}) represents the $n$-local constraint. \\
	$\forall i$$=$$2,3,...n-1$ the central party $\textbf{A}_i$ performs a single measurement $y_i$ on the joint state of the two subsystems that are received from $\textbf{S}_{i-1}$ and
	$\textbf{S}_{i}.$
	Each of the two extreme parties ($\textbf{A}_1,$ $\textbf{A}_{n+1}$) selects from a collection of two dichotomous inputs. The $n+1$-
	partite network correlations are local if those can be decomposed as:
	\begin{eqnarray}\label{tr2}
		&&\small{p(o_1,\vec{\mathfrak{o}}_2,...,\vec{\mathfrak{o}}_n,o_{n+1}|y_1,y_2,...,y_n,y_{n+1})}=\nonumber\\
		&&\int_{\Lambda_1}\int_{\Lambda_2}...\int_{\Lambda_n}
		d\lambda_1d\lambda_2...d\lambda_n\,q(\lambda_1,\lambda_2,...\lambda_n) P ,\,\textmd{with}\nonumber\\
		&&P=p(o_1|y_1,\lambda_1)\Pi_{i=2}^n p(\vec{\mathfrak{o}}_i|y_i,\lambda_{i-1},\lambda_i) p(o_{n+1}|y_{n+1},\lambda_n)\nonumber\\
		&&
	\end{eqnarray}
	Notations appearing in Eq.(\ref{tr2}) are detailed below:
	\begin{itemize}
		\item $\forall i,$ $\Lambda_i$ denotes the set of all possible values of $\lambda_i.$
		\item $y_1,y_{n+1}$$\in\{0,1\}$ label inputs of $\textbf{A}_1$ and $\textbf{A}_{n+1}$ respectively.
		\item $o_1,o_{n+1}$$\in$$\{0,1\}$ denote outputs of $\textbf{A}_1$ and $\textbf{A}_{n+1}$ respectively.
		\item $\forall i,$ $\vec{\mathfrak{o}}_i$$=$$(o_{i1},o_{i2})$ labels four outputs of input $y_i$ for $o_{ij}$$\in$$\{0,1\}$
		
	\end{itemize}
	$n+1$-partite correlations are $n$-local if they satisfy both Eqs.(\ref{tr1},\ref{tr2}). Hence, any set of correlations that do not satisfy both Eqs. (\ref{tr1},\ref{tr2}), are termed as non $n$-local.\\
	\par A $n$-local inequality \cite{km1} corresponding to linear $n$-local network is given by:
	\begin{eqnarray}\label{ineqn}
		&& \sqrt{|I|}+\sqrt{|J|}\leq  1,\,  \textmd{where}\nonumber\\
		&& I=\frac{1}{4}\sum_{y_1,y_{n+1}}\langle O_{1,y_1}O_2^0.....O_n^0O_{n+1,y_{n+1}}\rangle\nonumber\\
		&&   J= \frac{1}{4}\sum_{y_1,y_{n+1}}(-1)^{y_1+y_{n+1}}\langle \small{O_{1,y_1}O_2^1...O_n^1O_{n+1,y_{n+1}}}\rangle\,\,\textmd{with} \nonumber\\
		&&   \langle O_{1,y_1}O_2^i.....O_n^iO_{n+1,y_{n+1}}\rangle = \sum_{\mathcal{D}}(-1)^{\textbf{o}_1+\textbf{o}_{n+1}+\textbf{o}_{2i}+...\textbf{o}_{ni}}N_2,\nonumber\\
		&& \textmd{\small{where}}\,N_2=\small{p(\textbf{o}_1,\vec{\mathfrak{o}}_2,...,\vec{\mathfrak{o}}_n,\textbf{o}_{n+1}|y_1,y_{n+1})},\, i=0,1\nonumber\\
		&&\textmd{\small{and}}\, \mathcal{D}=\{\textbf{o}_1,\textbf{o}_{21},\textbf{o}_{22},...,\textbf{o}_{n1},\textbf{o}_{n2},\textbf{o}_{n+1}\}
	\end{eqnarray}
	Violation of Eq.(\ref{ineqn}) guarantees that the corresponding correlations are non $n$-local.
	\begin{center}
		\begin{figure}
			\includegraphics[width=3.5in]{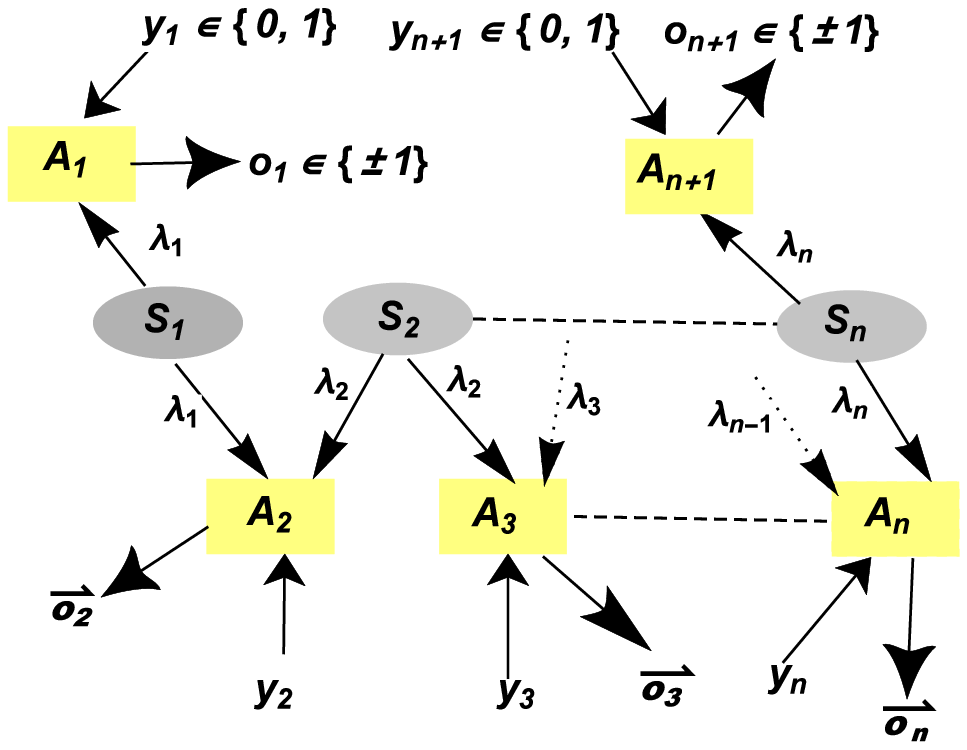} \\
			\caption{\emph{ Schematic diagram of a linear $n$-local network\cite{km1}}}
			\label{fig1}
		\end{figure}
	\end{center}
	\subsection{Quantum Linear $n$-local Network Scenario}\label{pre1}
	In a linear $n$-local network, let $\textbf{S}_i(i$$=$$1,2,...,n)$ generate an arbitrary two qubit state $\varrho_i.$ Each of the central parties thus receives two qubits: one of $\varrho_{i-1}$ and another of $\varrho_i.$ Extreme parties $\textbf{A}_1$ and $\textbf{A}_{n+1}$ receive single qubit of $\varrho_1$ and $\varrho_n$ respectively. Let each of the central parties perform the projective measurement in Bell basis $\{|\psi^{\pm}\rangle,|\phi^{\pm}\rangle\}.$ Let each of $\textbf{A}_1$ and $\textbf{A}_{n+1}$ perform projective measurements along any one of two arbitrary directions. For these measurement settings, non $n$-local correlations are ensured by violation of Eq.(\ref{ineqn}) ,i.e., if \cite{bilo5}:
	\begin{equation}\label{tribd}
		\textbf{B}_{lin}= \sqrt{\Pi_{i=1}^nt_{i1}+\Pi_{i=1}^nt_{i2}}>1
	\end{equation}
	with $t_{i1},t_{i2}$ denoting largest two singular values of correlation tensor ($T_i$) of $\varrho_i\,(i$$=$$1,2,...,n).$ If Eq.(\ref{tribd})
	is not satisfied, nothing can be concluded about the $n$-local nature of corresponding correlations.
	\subsection{Filtering Operations}\label{filter}
	As noted before filtering operations \cite{hir1} are used to reveal hidden nonlocality. Let $\varrho_{AB}$ denote a bipartite state shared between two distant parties Alice and Bob. A local filtering operation by one of the two parties, say, Alice may be defined as a local measurement ($F_A$) having two outcomes $\{\mathfrak{F}_A,\mathfrak{\bar{F}}_A\}$ such that $\mathfrak{F}_A^{\dagger}\mathfrak{F}_A$$+$$\mathfrak{\bar{F}}_A^{\dagger}\mathfrak{\bar{F}}_A$$=$$\mathbb{I}.$ Hence, $\mathfrak{F}_A^{\dagger}\mathfrak{F}_A$$\leq$$\mathbb{I}.$ Local filtering operation $F_B$ can be defined similarly. Let Alice and Bob perform $F_A$ and $F_B$ on their respective subsystems. After the parties apply the filtering operations, $4$ possible output states can be obtained. On being allowed to communicate over the classical channel, Alice and Bob post-select the state corresponding to output pair $(\mathfrak{F}_A,\mathfrak{F}_B),$ i.e., they keep the state:
	\begin{equation}\label{fil1}
		\varrho_{AB}^{'}=\frac{(\mathfrak{F}_A\otimes\mathfrak{F}_B)\varrho_{AB}(\mathfrak{F}_A\otimes
			\mathfrak{F}_B)^{\dagger}}{\textmd{Tr}((\mathfrak{F}_A\otimes\mathfrak{F}_B)\varrho_{AB}
			(\mathfrak{F}_A\otimes\mathfrak{F}_B)^{\dagger})}.
	\end{equation}
	Probability of obtaining $\varrho_{AB}^{'}$ as the output state is given by $\textmd{Tr}(\mathfrak{F}_A\otimes\mathfrak{F}_B\varrho_{AB}
	\mathfrak{F}_A^{\dagger}\otimes\mathfrak{F}_B^{\dagger}).$\\
	The post-selected state $\varrho_{AB}^{'}$, is usually referred to as the \textit{filtered state}.
	As indicated in \cite{hir1,hir2}, for the qubit case the diagonal form of local filters turns out to be most relevant:
	\begin{eqnarray}\label{fil3}
		\mathfrak{F}=\epsilon |0\rangle\langle 0|+|1\rangle\langle 1| , \epsilon \in [0,1]
	\end{eqnarray}
	
	For our purpose, we have used this particular form of local filter.
	
	\section{Sequential Linear $n$-local Network}\label{ress1}
	We now consider an $n$-local linear network where the parties are allowed to perform local filtering operations. The entire network scenario (see Fig.\ref{fig2}) is now divided into two stages: \textit{Preparation Stage} and \textit{Measurement Stage}.\\
	\textit{Preparation Stage:} As in usual linear $n$-local network (sec.\ref{pre1}), let each of $n$ sources $\textbf{S}_i$ distribute a two-qubit quantum state $\rho_{i,i+1}$ between $\textbf{A}_i$ and $\textbf{A}_{i+1}(i$$=$$1,2,...,n).$ Overall state of the particles shared by all the parties across the entire network is thus given by:
	\begin{equation}\label{fil4}
		\rho_{\small{initial}}=\otimes_{i=1}^{n}\rho_{i,i+1}
	\end{equation}
	On receiving the particles, let each of the parties now perform local filtering operations on their respective subsystems. Local filter applied on a single qubit by each of $\textbf{A}_1$ and $\textbf{A}_{n+1}$ is of the form given by Eq.(\ref{fil3}):
	\begin{equation}\label{fil5}
		\mathfrak{F}_{j}=\epsilon_j |0\rangle\langle 0|+|1\rangle\langle 1|,\,j=1,n+1,\,\textmd{and}\,\epsilon_j\in[0,1]
	\end{equation}
	Clearly, in case $\epsilon_j$$=$$1,$ then $\textbf{A}_j$ ($j$$=$$1,n+1$) does not apply any filtering operation.
	Each of the $n-1$ intermediate parties performs local filters on the joint state of the two qubits (received from two sources). Form of the local filter applied by $\textbf{A}_j(j$$=$$2,3,...,n-1)$ is given by:
	\begin{equation}\label{fil6}
		\mathfrak{F}_{j}=\otimes_{i=1}^2(\epsilon_j^{(i)} |0\rangle\langle 0|+|1\rangle\langle 1|),\,\,\epsilon_j^{(i)}\in[0,1]
	\end{equation}
	If $\epsilon_j^{(1)}$$=$$\epsilon_j^{(2)}$$=$$1,$ then $\textbf{A}_j$ ($j$$=$$2,3,...,n$) does not apply any filtering operation.
	
	The filtered state shared across all the parties takes the form:
	\begin{eqnarray}\label{fil7}
		\rho_{\small{filtered}}&=&N(\otimes_{j=1}^{n+1}  \mathfrak{F}_{j})\rho_{\small{initial}}(\otimes_{j=1}^{n+1}  \mathfrak{F}_{j})^{\dagger}\nonumber\\
		\textmd{where}\,N&=&\frac{1}{\textmd{Tr}((\otimes_{j=1}^{n+1}  \mathfrak{F}_{j})\rho_{\small{initial}}(\otimes_{j=1}^{n+1}  \mathfrak{F}_{j})^{\dagger})}
	\end{eqnarray}
	In Eq.(\ref{fil7}), $N$ denotes the probability of obtaining $\rho_{\small{filtered}}.$ To this end, one may note that in the preparation stage, at least one of the $n+1$ parties performs a filtering operation.\\
	\textit{Measurement Stage:} In this stage each of the parties now performs local measurements on their respective share of particles forming the state $\rho_{\small{filtered}}.$ Measurement context is same as in the usual linear $n$-local network scenario (sec.\ref{pre1}). To be precise, each of the central parties $\textbf{A}_2,\textbf{A}_3,...,\textbf{A}_n$  performs projective measurement in Bell basis $\{|\psi^{\pm}\rangle,|\phi^{\pm}\rangle\}.$ $\forall i$$=$$2,3,...,n,$ let $\textbf{B}_i$ denote the Bell state measurement (BSM \cite{BRAN}) of $\textbf{A}_i.$ Let each of $\textbf{A}_1$ and $\textbf{A}_{n+1}$ perform projective measurements $(\textbf{M}_0,\textbf{M}_1)$ and $(\textbf{N}_0,\textbf{N}_{1})$ respectively along any one of two arbitrary directions: $\{\vec{m}_0.\vec{\sigma},\vec{m}_1.\vec{\sigma}\}$ for $\textbf{A}_1$ and $\{\vec{n}_0.\vec{\sigma},\vec{n}_1.\vec{\sigma}\}$ for $\textbf{A}_{n+1}$ with $\vec{m}_0,\vec{m}_1,\vec{n}_0,\vec{n}_1$$\in$$ \mathbb{R}^3.$ Correlations generated due to the local measurements are then used to test a violation of the $n$-local inequality (Eq.(\ref{ineq})).\\
	\par The measurement settings considered here is the same as that considered for usual linear  $n$-local network\cite{bilo5}. For these set-up the $n$-local inequality (Eq.(\ref{ineqn})) takes the form:
	\begin{eqnarray}\label{ineq}
		\frac{1}{2}\sum_{h=0}^1\sqrt{\langle f_h(\textbf{M}_0,\textbf{M}_1,\textbf{N}_0,\textbf{N}_1)\rangle}\leq 1, \hskip 1cm\\
		\textmd{where},
		f_h(\textbf{M}_0,\textbf{M}_1,\textbf{N}_0,\textbf{N}_1)=A(\textbf{M})\otimes_{r=2}^{n-1}\sigma_{2+(-1)^h}\otimes A(\textbf{N})\nonumber\\
		\textmd{with}\,A(\textbf{X})=(\sum_{j=0}^1(-1)^{h.j}\textbf{X}_j),\,h=0,1\nonumber
	\end{eqnarray}
	For rest of the work Eq.(\ref{ineq}) will be referred to as the $n$-local inequality.
	Note that, it is the preparation stage where the scenario considered here differs from that of the usual linear $n$-local network scenario. In the usual scenario, the parties do not perform any operation in this stage. The overall state used in the measurement stage of the usual scenario is thus $\rho_{\small{initial}},$ in contrast to the post-selected state $\rho_{\small{filtered}}$ in the sequential scenario. Such a state is formed due to local operation and classical communication (sec.\ref{filter}) performed by at least one of $n+1$ parties in the preparation stage of the sequential network scenario.\\
	\par Having introduced the sequential linear $n$-local network scenario, we now proceed to characterize the non $n$-locality of the correlations generated therein.
	\begin{center}
		\begin{figure}
			\includegraphics[width=3.5in]{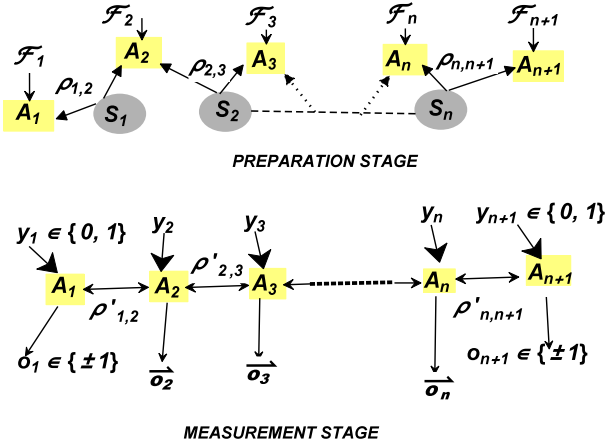} \\
			\caption{\emph{ Schematic diagram of the sequential linear $n$-local network. The overall quantum state shared between the parties in the preparation stage is $\rho_{\small{initial}}$ (Eq.(\ref{fil4})). In this stage, each of the parties performs local filtering operations (Eqs.(\ref{fil5},\ref{fil6})). $\rho_{\small{filtered}}$ (Eq.(\ref{fil7})) is the overall state in the measurement stage.     }}
			\label{fig2}
		\end{figure}
	\end{center}
	\section{Characterization of Hidden non $n$-locality}\label{ress3}
	Consider a sequential linear $n$-local network (Fig.\ref{fig2}) with the filtering operations (Eqs.(\ref{fil5},\ref{fil6})) and measurement context as specified in sec.\ref{ress1}. To be specific each of $\textbf{A}_1$ and $\textbf{A}_{n+1}$ performs projective measurements whereas each of $n-1$ intermediate parties perform Bell basis measurement. Before analyzing hidden non $n$-locality, we first give a formal definition of hidden non $n$-local correlations in such a sequential linear $n$-local network.
	\begin{definition}
		With each of the extreme parties performing projective measurements in anyone of two possible directions and each of the intermediate parties performing a fixed projective measurement in Bell basis, under the $n$-local constraint (Eq.\ref{tr1}), if $n+1$-partite correlations generated in the sequential linear $n$-local network are inexplicable in the form given by Eq.(\ref{tr2}), then such correlations are said to be \textit{hidden non $n$-local correlations} and the corresponding notion of nonlocality is defined as \textit{hidden non $n$-locality}.
	\end{definition}
	
	\par In order to characterize non $n$-locality in sequential network, the term \textit{`hidden'} is used in the same spirit as in \cite{pop1}. Consider a set of $n$ two-qubit states such that non $n$-locality cannot be detected by the violation of $n$-local inequality (Eq.(\ref{ineq})) in the usual $n$-local network. But the same set of states, when used in the sequential $n$-local network, may generate non $n$-local correlations. This corresponds to the detection of hidden non $n$-locality. Violation of the $n$-local inequality Eq.(\ref{ineq}) acts as a sufficient criterion to detect hidden non $n$-local behavior (if any) of the corresponding set of correlations generated in the sequential network scenario.\\
	\par Before progressing further, we would like to note that our entire analysis of non $n$-locality detection will rest upon violation of $n$-local inequality$.\,$As already mentioned in sec.\ref{intro}, violation of such an inequality acts as a sufficient criterion to detect non $n$-locality. It may happen that given a set of $n$ two-qubit states in the $n$-local network, the correlations fail to violate $n$-local inequality. Such correlations may still be non $n$-local. We can rule out non $n$-locality only if we show that the state admits a $n$-local hidden variable model. However, owing to the obvious complexity in giving any such proof, we rather focus on detection issue via the violation of $n$-local inequality. To be precise, when no violation of $n$-local inequality is observed in the usual $n$-local scenario, we use the given set of states in the sequential $n$-local network and test for violation of the same inequality. If the violation is observed, then hidden non $n$-locality is detected. However, one remains inconclusive if no violation is observed.\\
	\par Another important fact to be noted here is that the phenomenon of observing hidden non $n$-locality is stochastic. In a sequential $n$-local network, apart from uncertainty due to measurements (measurement stage), an extra level of uncertainty arises in the preparation stage. As already discussed in sec.\ref{ress1}, such uncertainty is due to the probability $N$ in obtaining the state $\rho_{filtered}$ (Eq.(\ref{fil7})) as the selected output corresponding to the local filtering operations made by the parties. Such a form of uncertainty is absent in the usual non $n$-locality paradigm. Ignoring measurement uncertainty (common in both usual and sequential $n$-local networks), we will refer to the probability term $N$ (Eq.(\ref{fil7})) as the \textit{probability of success for observing hidden non $n$-locality}.
	\par To provide instances of hidden non $n$-locality, we start with the simplest sequential bilocal network.
	\subsection{Examples of Hidden Non bilocality}\label{ex11}
	Let $\textbf{S}_1$ and $\textbf{S}_2$ generate $\varrho_{1,2}$ and $\varrho_{2,3}$ respectively from the following family of two-qubit states \cite{gr1,gr2}:
	\begin{eqnarray}\label{grud1}
		\varrho_{i,i+1}&=&v_i |00\rangle\langle 00|+(1-v_i)(\sin^2x_i|01\rangle\langle 01|+\cos^2x_i|10\rangle\langle 10|\nonumber\\
		&&+\sin x_i\cos x_i(|01\rangle\langle 10|+|10\rangle\langle 01|)),\nonumber\\
		&&i=1,2,\,v_i\in[0,1]\,\textmd{and}\,x_i\in [0,\frac{\pi}{4}]
	\end{eqnarray}
	Let only the intermediate party $\textbf{A}_2$ perform local filtering operations (Eq.(\ref{fil6})) on the joint state of two qubits received from $\textbf{S}_1,\textbf{S}_2.$ For suitable values of local filter parameters ($\epsilon_2^{(1)},\epsilon_2^{(2)}$) and suitable directions of projective measurements $(\vec{m}_0,\vec{m}_1,\vec{n}_0,\vec{n}_1)$ by $\textbf{A}_1$ and $\textbf{A}_{n+1},$ hidden non bilocality is observed (see Fig.\ref{fig3}). For instance, consider two particular states from the above family (Eq.(\ref{grud1})) specified by $(x_1,x_2,v_1,v_2)$$=$$(0.23,0.44,0.1,0.99).$ When used in usual bilocal scenario, L.H.S of Eq.(\ref{tribd}) takes value $0.8871.$ Hence, no violation of the bilocal inequality (Eq.(\ref{ineq}) for $n$$=$$2$) is obtained. But for $(\epsilon_2^{(1)},\epsilon_2^{(2)})$$=$$(0.8,0.97),$ and for suitable measurement settings, L.H.S. of the same inequality (Eq.(\ref{ineq}) gives value $1.081$ with approximately $62\%$ success probability. So the violation reveals hidden non bilocality.
	\begin{center}
		\begin{figure}
			\includegraphics[width=2.6in]{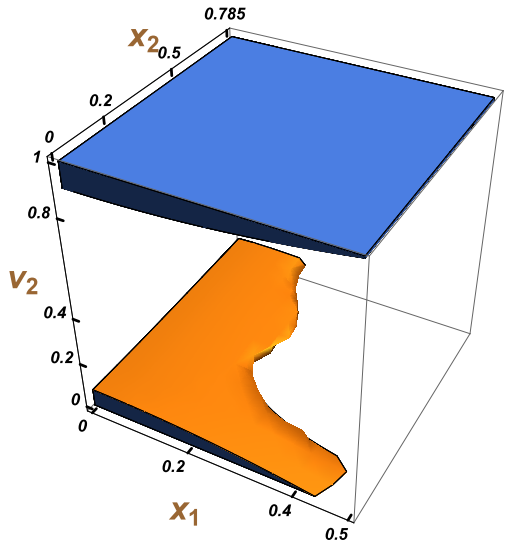} \\
			\caption{\emph{ Shaded region gives state parameters for which hidden non bilocality is observed for $v_1$$=$$0.1$ with not less than $60\%$ probability when only $\textbf{A}_2$ performs local filtering operations (Eq.(\ref{fil6})) for $(\epsilon_2^{(1)},\epsilon_2^{(2)})$$=$$(0.8,0.97).$ }}
			\label{fig3}
		\end{figure}
	\end{center}
	It is interesting to observe that, for states from the same family (Eq.(\ref{grud1})) with $x_1$$=0.23,$ $x_2$$=$$0.34,$ $v_2$$=$$0.15$, hidden non bilocality cannot be detected if only $\textbf{A}_2$ applies local filters. When such states are used in the network, hidden non bilocality can be detected only if all the three parties apply suitable local filters (see Fig.\ref{fig4}).
	\begin{center}
		\begin{figure}
			\includegraphics[width=2.6in]{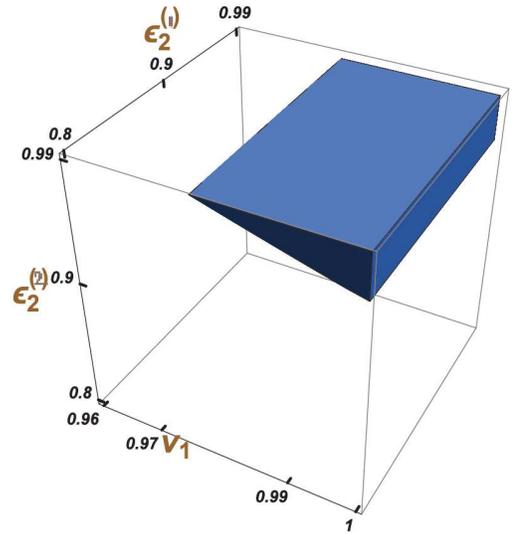} \\
			\caption{\emph{ Let us consider specific members from the family (Eq.(\ref{grud1})):$x_1$$=0.23,$ $x_2$$=$$0.34,$ $v_2$$=$$0.15.$ Shaded region gives state parameter $v_1$ and parameters of local filters $(\epsilon_2^{(1)},\epsilon_2^{(2)})$ applied by $\textbf{A}_2$ for which hidden non bilocality is observed with not less than $30\%$ probability approximately when all the parties apply local filters with extreme parties performing specific local filters for $(\epsilon_1,\epsilon_4)$$=$$(0.95,0.76).$ It may be noted that non bilocality cannot be detected if these states are used in the usual bilocal network.}}
			\label{fig4}
		\end{figure}
	\end{center}
	\subsection{Examples of Hidden Non trilocality}\label{ex1}
	Let us now consider a trilocal sequential network. Let each of $\textbf{S}_1,\textbf{S}_2,\textbf{S}_3$ distribute states from the above family of states (Eq.(\ref{grud1})). Let each of the two intermediate parties $\textbf{A}_2,\textbf{A}_3$ perform local filtering operations on their respective share of particles whereas the extreme parties do not perform any filtering operation. Hidden nontrilocality is observed in the network (see Fig.\ref{fig5}). For example, consider specific state parameters:$(x_1,x_2,x_3,v_1,v_2,v_3)$$=$$(0.3455,0.5586, 0.7799$\\$,0.1,0.12,0.1).$ Non trilocality is not detected when these three states are used in the usual trilocal network (L.H.S of Eq.(\ref{tribd}) takes value $0.9888$). But, under suitable measurement settings and specific filtering parameters: $(\epsilon_2^{(1)},\epsilon_2^{(2)},\epsilon_3^{(1)},\epsilon_3^{(2)})$$=$$(0.6362,0.99,0.989,0.989),$ L.H.S of Eq.(\ref{ineq}) gives $1.2332$ with approximately $44\%$ success probability.
	\begin{center}
		\begin{figure}
			\includegraphics[width=2.6in]{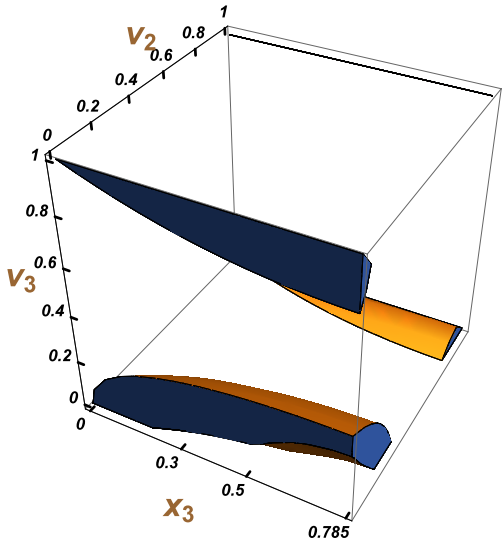} \\
			\caption{\emph{ For specific values of state parameters $(x_1,x_2,v_1)$$=$$(0.3455,0.5586,0.1),$ and specified local filters: $(\epsilon_2^{(1)},\epsilon_2^{(2)},\epsilon_3^{(1)},\epsilon_3^{(2)})$$=$$(0.6362,0.99,0.989,0.989),$ shaded region gives state parameters $x_3,\,v_2,\,v_3$ and parameters for which hidden non trilocality is observed with not less than $44\%$ probability. In case these states are used in the usual trilocal network, non trilocality cannot be detected.}}
			\label{fig5}
		\end{figure}
	\end{center}
	\subsection{Entanglement and Hidden non $n$-locality}\label{ex2}
	If mixed states are allowed in the network, all the sources need not distribute entangled states. For example, let us consider a sequential bilocal network. Let $\textbf{S}_1$ generate mixed entangled state $\varrho_{1,2}$ from the family of states given by Eq.(\ref{grud1}).
	Let $\textbf{S}_2$ distribute separable Werner state \cite{pop1,brunrev}:
	\begin{eqnarray}\label{wer1}
		\varrho_{2,3}&=&\frac{(1-p_2)}{4}\mathbb{I}_{4\times 4}+p_2(|01\rangle\langle 01|+|10\rangle\langle 10|\nonumber\\
		&&-(|01\rangle\langle 10|+|10\rangle\langle 01|))\ , p_2\in[0.25,0.30]
	\end{eqnarray}
	When $\textbf{A}_2$ applies suitable local filtering operation in the preparation stage, then hidden non bilocality is detected for suitable local measurement settings applied in the measurement stage of the network (see subfig.a in Fig.\ref{fig6}). But no violation of bilocal inequality (Eq.(\ref{ineq}) for $n$$=$$2$) is observed when the same states are used in the usual bilocal network \cite{bilo2}.   \\
	\par Let us now consider a sequential trilocal network. Let $\textbf{S}_1,\textbf{S}_3$ each generate a mixed entangled state from the same family of states (Eq.(\ref{grud1})) whereas $\textbf{S}_2$ generate a separable Werner state (Eq.(\ref{wer1})). Under a suitable measurement context, hidden non trilocality can be observed in the network (see subfig.b in Fig.\ref{fig6}). This was noted in \cite{birev} while considering usual network nonlocality, however we observe this phenomenon too in the pursuit to reveal hidden non $n$-locality.\\
	\par All these instances imply that not all the sources need to generate entanglement in a sequential $n$-local network for detecting hidden non $n$-locality. 
	\begin{center}
		\begin{figure}
			\begin{tabular}{c}
				\subfloat[ ]{\includegraphics[trim = 0mm 0mm 0mm 0mm,clip,scale=0.7]{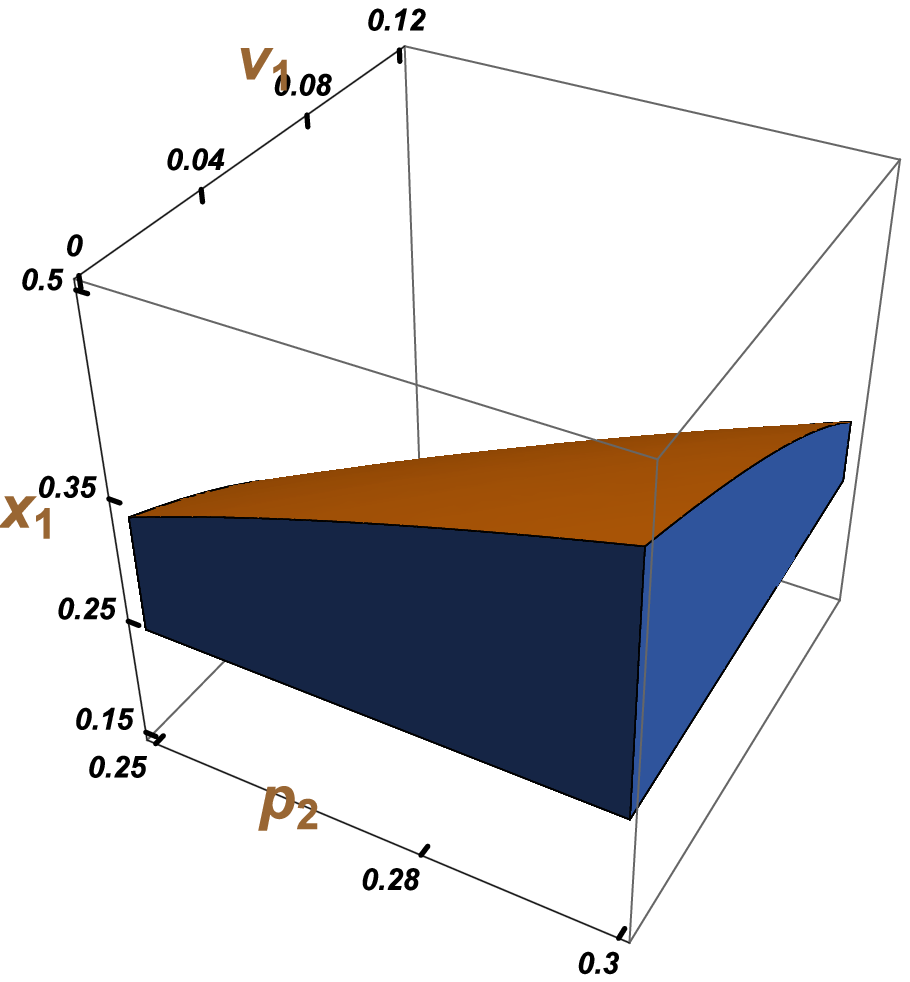}}\\
				\subfloat[]{\includegraphics[trim = 0mm 0mm 0mm 0mm,clip,scale=0.7]{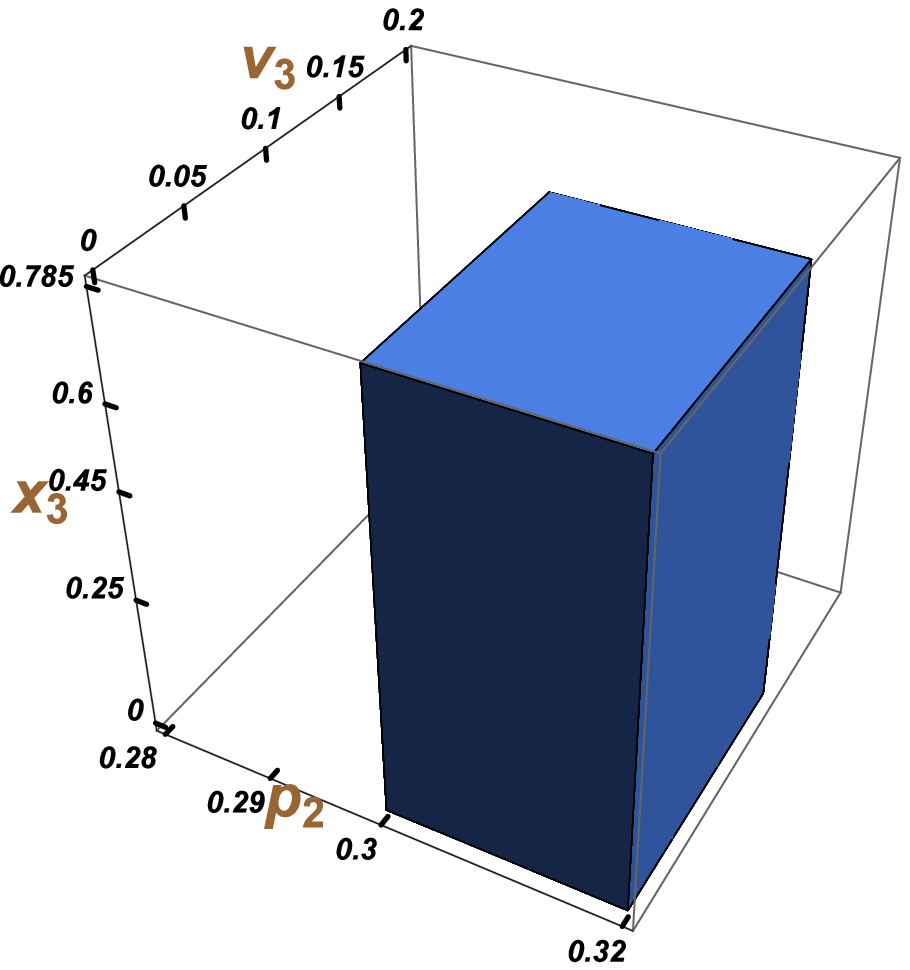}}\\
			\end{tabular}
			\caption{\emph{  In both subfigures, shaded portions indicate regions in the parameter space ($(v_1,x_1,p_2)$ in subfig.a and $(v_3,x_3,p_2)$ in subfig.b) for which hidden non $n$-locality is observed for $n$$=$$2$ (subfig.a) and $n$$=$$3$ (subfig.b). Specifications used in subfig.a are $(\epsilon_2^{(1)}, \epsilon_2^{(2)})$$=$$(0.46,1)$. Specifications used in subfig.b are $(\epsilon_2^{(1)}, \epsilon_2^{(2)}, \epsilon_3^{(1)}, \epsilon_3^{(2)},x_1,v_1)$$=$$(0.762,0.038,0.038,1,0.3,0.07)$. In each of these two cases, violation of $n$-local inequality (Eq.(\ref{ineq}) for $n$$=$$2,3$) is not observed in the usual $n$-local network.}}
			\label{fig6}
		\end{figure}
	\end{center}
	\subsubsection{Comparison with observations in \cite{bilo5}}
	As already discussed in sec.\ref{ress1}, the measurement settings considered here are the same as that considered in usual linear $n$-local network \cite{bilo5}. As per the arguments presented in \cite{bilo5}, non bilocality (non trilocality) can be observed when one (two) maximally entangled state(s) are used along with any generic two qubit state which does not exhibit Bell-CHSH violation. However, in our sequential network scenario, non bilocal (non trilocal) correlations are generated even when one (two) mixed, hence non maximally entangled states are used with a separable state. Consequently, our results, as discussed above in subsec.\ref{ex2}, clearly points out the utility of applying suitable filtering operations in the context of simulating non $n$-locality in linear networks.
	\subsection{No violation of Eq.(\ref{ineq}) while using product of mixed states}
	Next, let us consider the case when only product of two single-qubit mixed states are used in the sequential $n$-local network. Let each party be allowed to perform local filters as mentioned in sec.\ref{ress1}. Hidden non $n$-locality cannot be detected if at least one of the sources distributes a product of two single-qubit mixed states. The result is formalized as follows:
	\begin{theorem}
		In a sequential $n$-local network, with each extreme party performing projective measurements and each of the intermediate party measuring in Bell basis, for any $i$$\in$$\{1,2,...,n\},$ if $i^{th}$ source generates a product of two arbitrary single-qubit mixed states and if the parties perform local filters of the form given by Eqs.(\ref{fil5},\ref{fil6}), then a violation of $n$-local inequality (Eq.(\ref{ineq})) is impossible for any finite $n.$
	\end{theorem}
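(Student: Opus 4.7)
The plan is to prove that whenever one source distributes a product state, the left-hand side $\textbf{B}_{lin}$ of the violation criterion Eq.(\ref{tribd}) is at most $1$ for the post-filtered network, thereby ruling out a violation of Eq.(\ref{ineq}) under the stipulated measurement context. Two ingredients drive the argument: local filters cannot create correlations across a product source, and $\textbf{B}_{lin}$ collapses as soon as one source contributes a rank-one correlation tensor.

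First, I would establish that the filtered overall state factorises source-wise. From Eqs.(\ref{fil5}) and (\ref{fil6}) the overall filter $\otimes_{j=1}^{n+1}\mathfrak{F}_j$ in Eq.(\ref{fil7}) is already a tensor product of single-qubit operators, since each intermediate $\mathfrak{F}_j$ is itself a product $\mathfrak{F}_j^{(1)}\otimes\mathfrak{F}_j^{(2)}$ acting on the two qubits held by $\textbf{A}_j$. Regrouping these operators by source rather than by party, the operators acting on the two qubits of source $k$ take a local product form $F_k\otimes G_k$. Because the trace of a tensor product is multiplicative, the normalisation $N$ in Eq.(\ref{fil7}) splits as a product, and therefore $\rho_{\small{filtered}} = \otimes_{k=1}^{n}\rho'_{k,k+1}$, with $\rho'_{k,k+1}$ the individually filtered and normalised state of source $k$. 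In particular, if $\rho_{i,i+1} = \rho_i^A\otimes\rho_i^B$ then $\rho'_{i,i+1} = \rho_i^{A\prime}\otimes\rho_i^{B\prime}$ remains a product state.

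Next, I would read off the correlation tensor of $\rho'_{i,i+1}$. From the Bloch-Fano form Eq.(\ref{st4}), its entries are $\mathfrak{w}_{jk} = a'_{ij}\,b'_{ik}$, a rank-one $3\times 3$ matrix whose singular values are $(|\vec{a}'_i||\vec{b}'_i|,0,0)$. In particular $t_{i2}=0$, so the second product in Eq.(\ref{tribd}) vanishes: $\Pi_{k=1}^{n} t_{k2}=0$. For every $k\neq i$ the largest singular value obeys $t_{k1}\leq 1$; indeed $t_{k1} = \max_{|\vec{u}|=|\vec{v}|=1}\mathrm{Tr}[\rho'_{k,k+1}(\vec{u}\cdot\vec{\sigma})\otimes(\vec{v}\cdot\vec{\sigma})]$ and the Hermitian observable $(\vec{u}\cdot\vec{\sigma})\otimes(\vec{v}\cdot\vec{\sigma})$ has operator norm one for unit $\vec{u},\vec{v}$. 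Hence $\Pi_{k=1}^n t_{k1}\leq 1$, giving $\textbf{B}_{lin}=\sqrt{\Pi_k t_{k1}}\leq 1$ irrespective of the filter parameters and of $n$, so Eq.(\ref{ineq}) cannot be violated.

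The only step that needs genuine care is the source-wise factorisation of $\rho_{\small{filtered}}$: one must correctly track which filter component acts on which half of which source state, and verify that the normalisation $N$ indeed factorises as a product of per-source normalisations. Once this bookkeeping is clean, the theorem reduces to the two short observations above, and the single inequality $t_{k1}\leq 1$ used for $k\neq i$ is a standard operator-norm bound.
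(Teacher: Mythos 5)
Your proposal is correct and follows essentially the same route as the paper: the paper's Appendix likewise first shows that $\rho_{\small{filtered}}$ factorises into normalised per-source filtered states, reducing the problem to the closed-form bound $\sqrt{\Pi_j t''_{j1}+\Pi_j t''_{j2}}$, and then notes that the filtered product state has a rank-one correlation tensor (one non-zero singular value, all singular values at most unity), so the bound cannot exceed $1$. The only cosmetic differences are that you justify $t_{k1}\leq 1$ by an explicit operator-norm argument where the paper cites a known result, and you should note that the same bound trivially applies to the product source itself since $|\vec{a}'_i||\vec{b}'_i|\leq 1$.
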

	\begin{proof}
		See Appendix.
	\end{proof}
	\section{Characterization in terms of Bloch parameters}\label{ex3}
	Here we intend to analyze hidden non $n$-locality detection from density matrix formalism of the states used in the corresponding network$.\,$Examples of hidden non bilocality and non trilocality illustrated in subsecs.\ref{ex11},\ref{ex1}, involve members from a particular family of two-qubit states (Eq.(\ref{grud1})). Now it may be noted that any member $\varrho_i$ from this family has non-null local Bloch vectors:\\
	\begin{eqnarray*}
		u_i &=&(0,0,v_i - (1 - v_i)\cos(2x_i)) \\
		z_i &=&(0,0,v_i +(1 - v_i)\cos(2x_i))
	\end{eqnarray*}
	Again, as discussed in subsec.\ref{ex2}, hidden non $n$-locality (for $n$$=$$2,3$) is observed when one of the states is Werner state. It may be noted that Werner state does not have any local Bloch vector. Combining these two observations from subsecs.\ref{ex11},\ref{ex1} and \ref{ex2}, it is clear that hidden non $n$-locality can be observed when at least one of the states used in the corresponding network has local Bloch vector. At this junction, we conjecture that hidden non $n$-locality cannot be detected via the violation of Eq.(\ref{ineq}) when none of the states used in the network has local Bloch vectors (see Appendix).
	\subsection{Closed Form of Upper Bound of Eq.(\ref{ineq})}
	\par In absence of filtering operations, there exists a closed form in terms of state parameters ($\textbf{B}_{lin}$ in Eq.(\ref{tribd})) of the upper bound of linear $n$-local inequality (Eq.(\ref{ineq})). Following the method discussed in Appendix, the upper bound of Eq.(\ref{ineq}) in linear sequential network scenario maintains the same structure as that of $\textbf{B}_{lin}$ in Eq.(\ref{tribd}):
	\begin{equation}\label{app7}
		\textbf{B}_{seq}= \sqrt{\Pi_{j=1}^nt^{''}_{j1}+\Pi_{j=1}^nt^{''}_{j2}},
	\end{equation}
	where $\forall j$$=$$1,2,...,n,$ the two largest singular values $t^{''}_{1},t^{''}_{2}$ of the normalized post selected states $\rho_{j,j+1}^{''}$ (Eq.(\ref{extra})) are functions of the filtering parameters $\epsilon_1,\epsilon_{n+1},\epsilon^{(1)}_k,\epsilon^{(2)}_k(k$$=$$2,3,...,n-1)$, the singular values of the correlation tensor and also local bloch vectors of $\rho_{j,j+1}.$ So, unlike $\textbf{B}_{lin},$ the closed form of upper bound of the $n$-local inequality (Eq.(\ref{ineq})) depends on state parameters and also on the filtering parameters. Eq.(\ref{ineq}) is thus violated if:
	\begin{equation}\label{jhml}
		\textbf{B}_{seq}>1
	\end{equation}
	For any given set of initial states $\rho_{j,j+1}(j$$=$$1,2,...,n),$ hidden non $n$-local correlations can be simulated in case there exist suitable filters such that Eq.(\ref{tribd}) is violated but Eq.(\ref{jhml}) is satisfied.
	\subsection{Illustration}
	Let us consider the following family of two qubit states\cite{xs1}:
	\begin{eqnarray}\label{ilus1}
		\chi(x_1,x_2,x_3,x_4) = x_1|00\rangle\langle 00|+x_2|01\rangle\langle 01|+ x_3|11\rangle\langle 11|\nonumber\\
		+x_4 (|00\rangle\langle 11|+|11\rangle\langle 00|),\textmd{with}\,x_1+x_2+x_3=1\nonumber\\
		0\leq x_1,x_2,x_3\leq 1,\,x_4^2\leq x_1.x_3\nonumber
	\end{eqnarray}
	This family forms a subclass of X state\cite{xs1}.
	Singular values of the correlation matrix (Eq.(\ref{st41})) of this class are:
	\begin{eqnarray}\label{ilus2}
		s_1 &=& 2|x_4| \nonumber\\
		s_2&=& s_1\nonumber\\
		s_3 &=& |x_1 - x_2 + x_3|
	\end{eqnarray}
	
	Local Bloch vectors are given by:
	\begin{eqnarray}\label{ilus3}
		\vec{u} &=& (0,0,x_1+x_2-x_3)\nonumber \\
		\vec{z} &=& (0,0,x_1-x_2-x_3)
	\end{eqnarray}
	Let any two states from this class of states (Eq.(\ref{ilus1})) be used in the usual bilocal network: $\rho_{i,i+1}$$=$$\chi(x_{i1},x_{i2},x_{i3},x_{i4}),\,i$$=$$1,2.$ For that network, bilocal inequality is not violated if:$\textbf{B}_{lin}$ then turns out to be:
	\begin{eqnarray}\label{ilus4}
		\textbf{B}_{lin}&\leq& 1\,\textmd{\small{where}}\\
		\textbf{B}_{lin}&=&\textmd{Max}[  \sqrt{2\small{L}_1},\sqrt{\small{L}_1+\small{L}_2},\sqrt{\small{L}_1+\small{L}_3},\sqrt{\small{L}_2+\small{L}_3},\nonumber\\
		&&\sqrt{\small{L}_1+\small{L}_4}],\,\textmd{\small{with}}\nonumber\\
		\small{L}_1&=&4|x_{14}.x_{24}|\nonumber\\
		\small{L}_2&=& 2|(x_{11} -  x_{12} +  x_{13})x_{24}|\nonumber\\
		\small{L}_3&=&2|(x_{21} -  x_{22} +  x_{23})x_{14}|\nonumber\\
		\small{L}_4&=& \Pi_{i=1}^2|(x_{i1} -  x_{i2} +  x_{i3})|\nonumber\\
	\end{eqnarray}
	Now let $\chi(x_{i1},x_{i2},x_{i3},x_{i4})(i$$=$$1,2)$ be used in a sequential bilocal network where all the parties are applying the same filtering operation, i.e., $\epsilon_1$$=$$\epsilon_2^{(1)}$$=$$\epsilon_2^{(2)}$$=$$\epsilon_3$$=$$\epsilon$. After the preparation stage, correlation tensor of the normalized post selected states $\rho_{i,i+1}^{''}$ (Eq.(\ref{extra})) turns out to be:
	\begin{eqnarray}\label{ilus5}
		s_{i1}^{''} &=& \frac{2|x_{i4}|\epsilon^2}{|x_{i3} + \epsilon^2 (x_{i2} +x_{i1} \epsilon^2)|} \nonumber\\
		s_{i2}^{''} &=& s_{i1}^{''}\nonumber\\
		s_{i3}^{''} &=&\frac{|x_{i3} + \epsilon^2 (-x_{i2} +x_{i1} \epsilon^2)|}{|x_{i3} + \epsilon^2 (x_{i2} +x_{i1} \epsilon^2)|}
	\end{eqnarray}
	The probability of success is given by $\Pi_{i=1}^2(x_{i3} + \epsilon^2 (x_{i2} +x_{i1} \epsilon^2)).$\\
	As discussed above, $\textbf{B}_{seq}$ is obtained by using the largest two of these singular values (Eq.(\ref{ilus5})) of each of $\rho_{1,2}^{''}$ and $\rho_{2,3}^{''}.$ Violation of the bilocal inequality(Eq.(\ref{ineq}) for $n$$=$$2$) is obtained if:
	\begin{eqnarray}\label{ilus6}
		\textbf{B}_{seq}&>&1,\,\textmd{\small{where}}\\
		\textbf{B}_{seq}&=&\frac{1}{\sqrt{\small{L}_8}}\textmd{\small{Max}}[
		\sqrt{2\small{L}_5},
		\sqrt{\small{L}_5+\small{L}_6},
		\sqrt{\small{L}_5+\small{L}_7},
		\sqrt{\small{L}_6+\small{L}_7},\nonumber\\
		&&\sqrt{\small{L}_5  +\frac{\small{L}_6.\small{L}_7}{4x_{14}x_{24}\epsilon^4}}]
		,\,\textmd{\small{with}}\nonumber\\
		\small{L}_5&=&4|x_{14}x_{24}\epsilon^4| \nonumber \\
		\small{L}_6&=&2|x_{24}\epsilon^2(x_{13}+\epsilon^2(-x_{12}+x_{11}\epsilon^2))| \nonumber\\
		\small{L}_7 &=&2| x_{14}\epsilon^2(x_{23}+\epsilon^2(-x_{22}+x_{21}\epsilon^2))| \nonumber\\
		\small{L}_8&=&\Pi_{i=1}^2(x_{i3}+\epsilon^2(x_{i2}+x_{i1}\epsilon^2))
	\end{eqnarray}
	For suitable value of the filtering parameter $\epsilon,$ there exist states from this subclass of $X$ states (Eq.(\ref{ilus1})) which satisfy both the above relations (Eqs.(\ref{ilus4},\ref{ilus6})) for which hidden nonbilocal correlations are simulated in sequential bilocal network but non bilocality cannot be detected in usual bilocal network (see Fig.\ref{newfig1}). For a numeric instance, let us consider $\chi(0.2,0.1,0.7,0.15)$ and $\chi(0.86,0,0.14,0.33).$ For these two states $\textbf{B}_{lin}$$=$$0.999.$ Hence, no violation of Eq.(\ref{ineq}) observed in usual bilocal network. However, in sequential bilocal network, when all the parties perform filtering with $\epsilon$$=$$0.77,$ violation of the same is observed (with approximately $37\%$ success probability) as $\textbf{B}_{seq}$$=$$1.023.$
	\begin{center}
		\begin{figure}
			\includegraphics[width=2.6in]{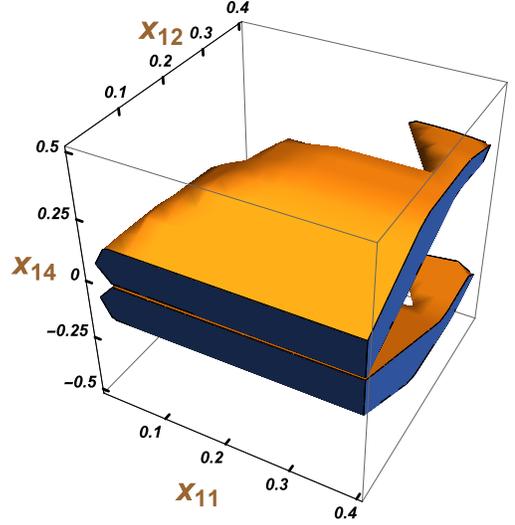} \\
			\caption{\emph{ Shaded region gives the specify the state parameters of $\chi(x_{11},x_{12},x_{13},x_{14})$  (Eq.(\ref{ilus1})), which when used with $\chi(0.2,0.1,0.7,0.15)$ in the sequential bilocal network, simulate hidden non bilocal correlations in case each of the four parties perform apply same local filters specified by $\epsilon$$=$$0.75.$ It may be noted that non bilocality cannot be detected when $\chi(0.2,0.1,0.7,0.15)$ and $\chi(x_{11},x_{12},x_{13},x_{14})$ corresponding to any point in the shaded region, are used in usual bilocal network.}}
			\label{newfig1}
		\end{figure}
	\end{center}
	
	\section{Enhancement In Robustness To noise}\label{noisy}
	A linear $n$-local network scenario underlies different entanglement distribution protocols involving quantum repeaters \cite{birev}. In an idealistic situation pure entangled states are supposed to be communicated among the distant observers (often referred to as \textit{nodes} \cite{reps}) in any such network structure. However, in practical situations, due to unavoidable interaction with the environment, entanglement is transferred across noisy channels \cite{nie}. It is thus significant to study any possible method to enhance possibility of detecting non classicality in the presence of noise in the network. Hence, from practical perspectives, it becomes interesting to explore procedures that can increase resistance to noise of the $n$-local inequality for detecting non $n$-local correlations. Applying suitable local filters turns out to be effective in this context. To be specific it is observed that non bilocality can be detected over a wider range of noise parameter in sequential linear bilocal network in comparison with usual bilocal network. In support of our claim, we provide with illustration considering communication over two specific noisy channels.\\
	\subsection{Communication Through Bit Flip Channel\cite{nie}}
	Let each of the two sources  $\textbf{S}_1,\textbf{S}_2$ generate a pure entangled state:
	\begin{equation}\label{pure1}
		|\Psi(\theta)\rangle=\cos\theta|01\rangle+\sin\theta |10\rangle,\,\theta\in(0,\frac{\pi}{4})
	\end{equation}
	Let each of the two qubits generated from $\textbf{S}_i$ be passed through a bit flip channel parameterized by $p_i(i$$=$$1,2).$ $\forall i$$=$$1,2,\,p_i$ denotes the probability with which the state of a single qubit is flipped from $|0\rangle$ to $|1\rangle$  and vice-versa. Each of $\rho_{1,2}$ and $\rho_{2,3}$ is thus a two qubit mixed entangled state:
	\begin{eqnarray}\label{pure2}
		\rho_{i,i+1}&=&p_i(1-p_i)(|00\rangle\langle 00|+\sin2\theta( |00\rangle\langle 11|+|11\rangle\langle 00|))\nonumber\\
		&&+((1-p_i)^2\cos^2\theta+p_i^2\sin^2\theta) |01\rangle\langle 01|+\nonumber\\
		&&((1-p_i)^2\sin^2\theta+p_i^2\cos^2\theta) |10\rangle\langle 10|+\nonumber\\
		&&(1-2p_i+2p_i^2) \cos\theta\sin\theta(|10\rangle\langle 01|+ |01\rangle\langle 10|)\nonumber\\
		&&i=1,2
	\end{eqnarray}
	On application of suitable local filters by the parties, hidden nonbilocal correlations are simulated over an enhanced range of noise parameters ($p_1,p_2$) compared to the range of $(p_1,p_2)$ for which nonbilocality is detected in usual bilocal network (see Fig.\ref{fignew2}). For a specific instance,  consider two identical copies of $|\Psi(0.62)\rangle$ (Eq.(\ref{pure1})) As discussed above, let these states be passed through bit flip channels, $\rho_{1,2}$ and $\rho_{2,3}$ (Eq.(\ref{pure2})) being characterized by $p_1$$\in$$(0,0.4)$ and $p_2$$=$$0.15$ respectively. In case the parties do not apply filtering, $\textbf{B}_{lin}$$>$$1$ if $p_1$$\in$$(0,0.214].$ Now, in the sequential bilocal network, when $\textbf{A}_2$ applies filtering operations specified by $\epsilon_2^{(1)}$$=$$0.98,$ $\epsilon_2^{(2)}$$=$$0.79,$ $\textbf{B}_{seq}$$>$$1$ is obtained for $p_1$$\in$$(0,0.235].$ Hence, for this particular instance, when $p_1$$\in$$[0.215,0.235]$, nonbilocality can be detected in sequential network but not in usual bilocal network.
	\subsection{Communication Through Amplitude Damping Channel\cite{nie}}
	Let each of $\textbf{S}_1,\textbf{S}_2$  generate an identical copy of the pure entangled state $|\Psi(\theta)\rangle$ (Eq.(\ref{pure1})). Each of the qubits of $|\Psi(\theta\rangle),$ generated from $\textbf{S}_i,$  are distributed through identical amplitude damping channels characterized by damping parameter $\gamma_i(i$$=$$1,2).$ The mixed entangled states thus distributed in the network are given by:
	\begin{eqnarray}\label{pure3}
		\rho_{i,i+1}&=&\gamma_i(|00\rangle\langle 00|+(1-\gamma_i)(\cos^2\theta(|01\rangle\langle 01|+\nonumber\\
		&&\cos\theta\sin\theta ((|10\rangle\langle 01|+(|01\rangle\langle 10|)\nonumber\\
		&&\sin^2\theta(|10\rangle\langle 10|)\,\,i=1,2.
	\end{eqnarray}
	It is observed that there exists range of damping parameters $(\gamma_1,\gamma_2)$ for which hidden nonbilocality can be exploited under effect of suitable filtering operations in contrast to usual bilocal network where no violation of bilocal inequality (Eq.(\ref{ineq})) can be observed (see Fig.\ref{fignew2}). \\
	For example, consider the followings: $|\Psi(0.55)\rangle$ and amplitude damping channels with $\gamma_1$$=$$0.21$ and $\gamma_2$$\in$$(0,1).$ With these specifications, $\textbf{B}_{lin}$$>$$1$ for $\gamma_2$$\in$$(0,0.2].$ Now, for $\epsilon_1$$=0.78,$ $\epsilon_3$$=0.79,$ $\epsilon_2^{(1)}$$=0.22$ and $\epsilon_2^{(2)}$$=0.1,$ $\textbf{B}_{seq}$$>$$1$ for $\gamma_2$$\in$$(0,0.54].$ $(0.2,0.54]$ thus turns out to be the enhanced range of visibility for detecting violation of the bilocal inequality under effective filtering operations.
	\begin{center}
		\begin{figure}
			\begin{tabular}{cc}
				\subfloat[]{\includegraphics[trim = 0mm 0mm 0mm 0mm,clip,scale=0.45]{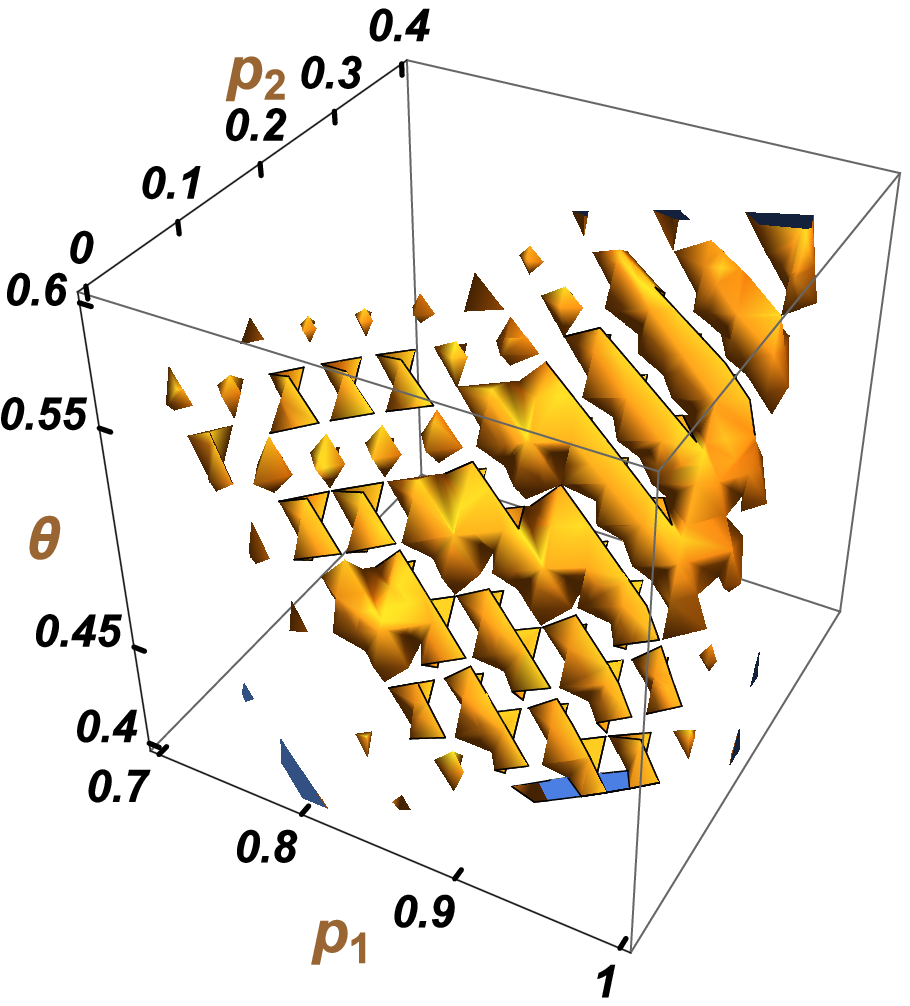}}&\subfloat[]{\includegraphics[trim = 0mm 0mm 0mm 0mm,clip,scale=0.45]{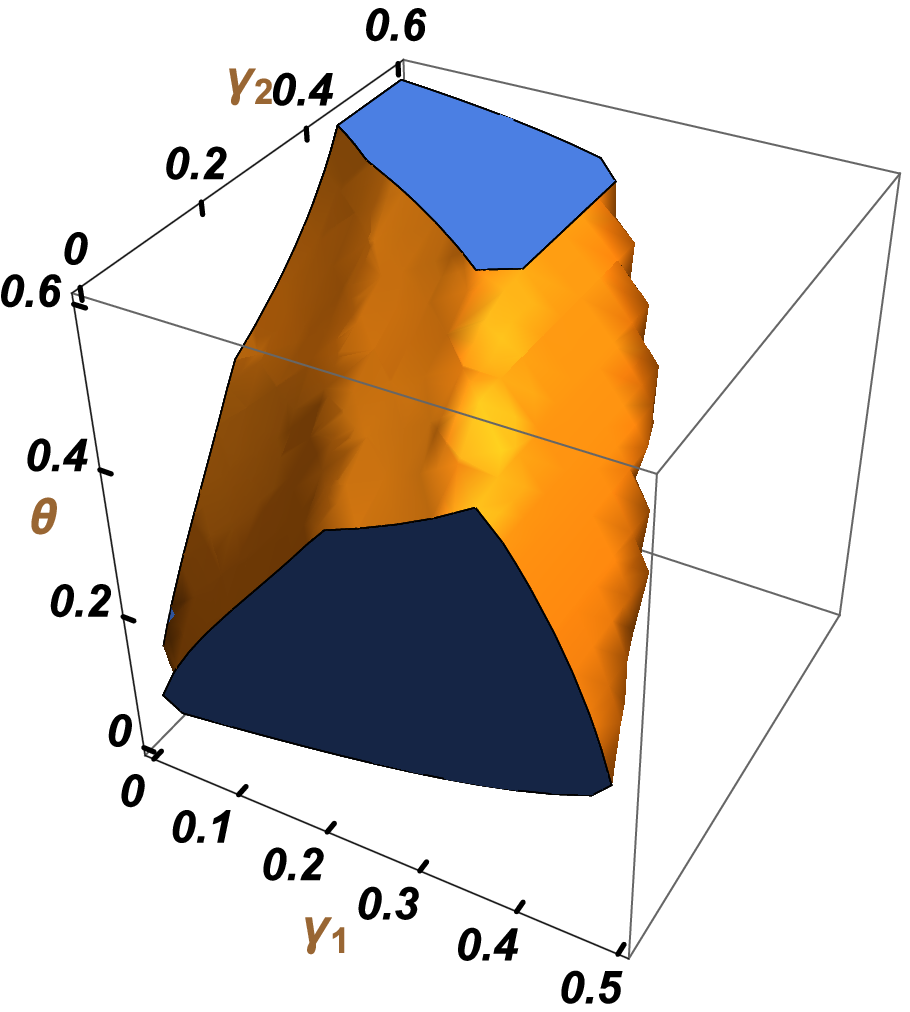}}\\
				\subfloat[]{\includegraphics[trim = 0mm 0mm 0mm 0mm,clip,scale=0.45]{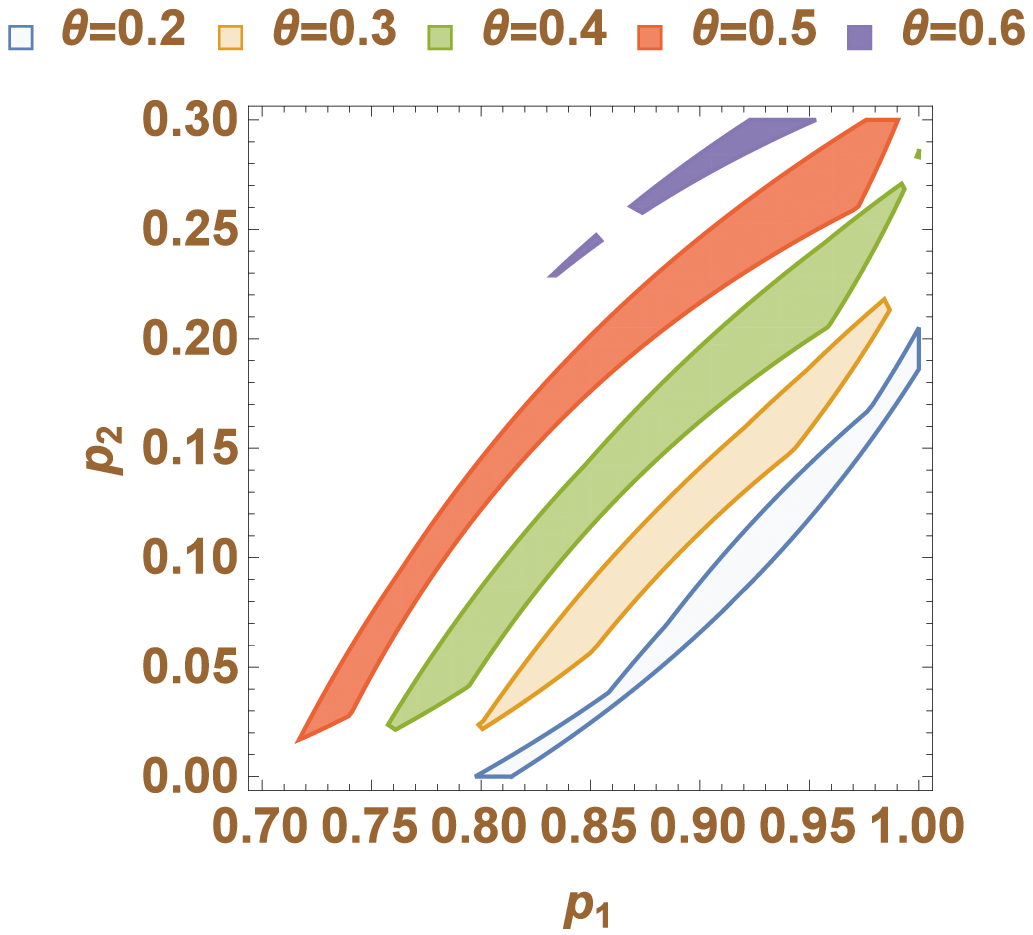}}&\subfloat []{\includegraphics[trim = 0mm 0mm 0mm 0mm,clip,scale=0.45]{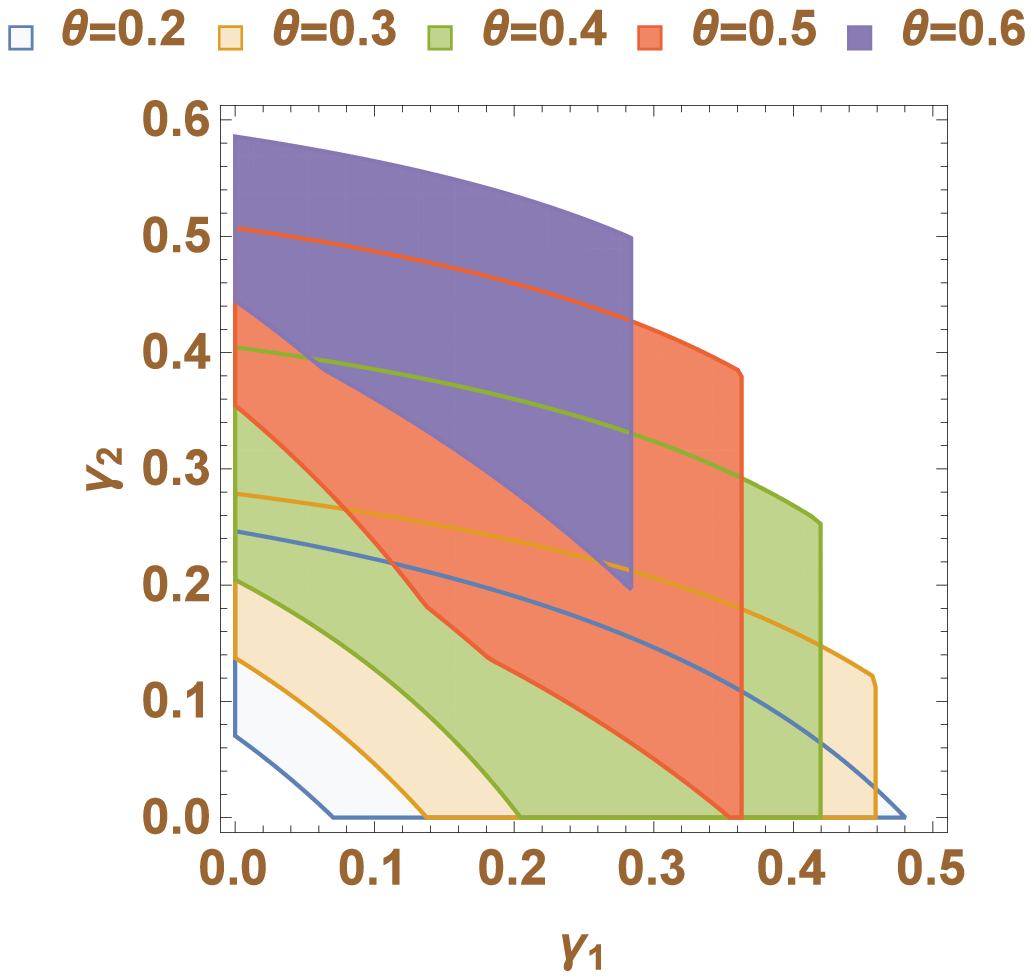}}\\
				\subfloat[]{\includegraphics[trim = 0mm 0mm 0mm 0mm,clip,scale=0.45]{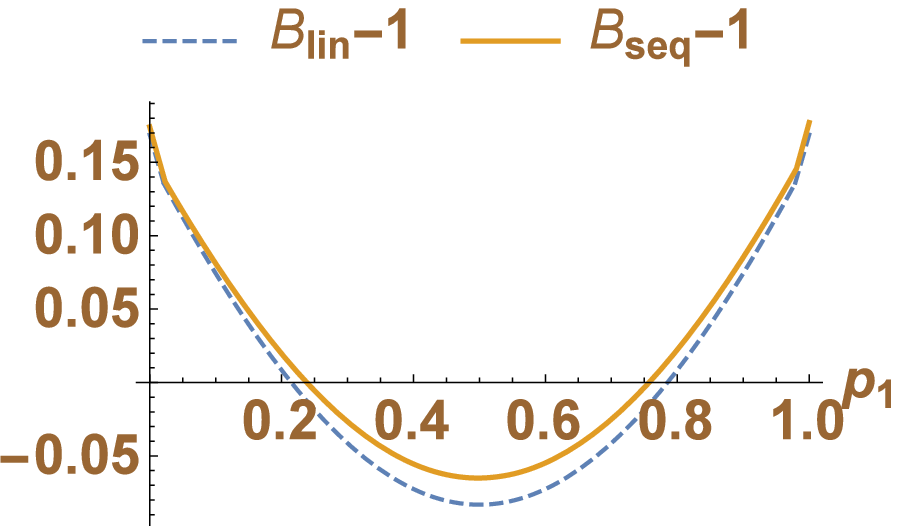}}&\subfloat[]{\includegraphics[trim = 0mm 0mm 0mm 0mm,clip,scale=0.45]{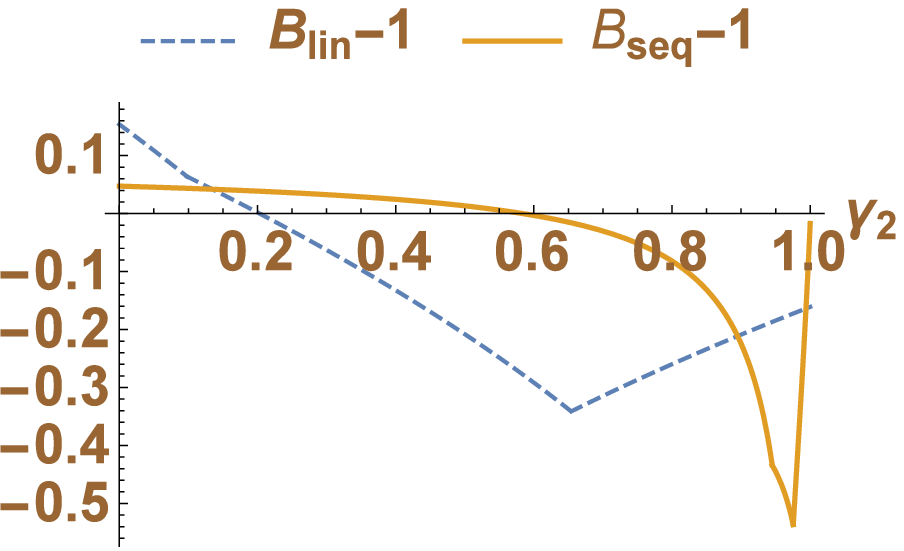}}\\
			\end{tabular}
			\caption{\emph{ All the subfigures in this figure point out the utility of applying suitable filters in perspective of increasing the bilocal inequality's resistance to noise when qubits are distributed across bit flip channel (all subfigures in left hand side panel) and amplitude damping channel (subfigures in right hand side panel). In the context of exploiting hidden non bilocality, in each of left and right panels: (i) the topmost one gives the possible region of both state and noise parameters, (ii) the middle one gives the region of noise parameters for different pure entangled states (Eq.(\ref{pure1})) , i.e., for different values of $\theta$ and (iii) the bottom one gives the range of one noise parameter for a fixed value of the other noise parameter and state parameter for which violation of bilocal inequality (Eq.(\ref{ineq})) is observed both with and without filtering operations.Details of the filtering parameters in each of the subfigures are as follow:$\epsilon_2^{(1)}$$=$$0.68$, $\epsilon_2^{(2)}$$=$$0.78$ in subfigures $a$ and $c$; $\epsilon_1$$=$$0.78$,$\epsilon_2^{(1)}$$=$$0.22$, $\epsilon_2^{(2)}$$=$$0.15,$ $\epsilon_3$$=$$0.73$ in subfigures $b$ and $d$; $\epsilon_2^{(1)}$$=$$0.98$, $\epsilon_2^{(2)}$$=$$0.79,$ $(\theta,p_2)$$=$$(0.58,0.15)$ in subfigure $e$ and $\epsilon_1$$=$$0.78$,$\epsilon_2^{(1)}$$=$$0.22$, $\epsilon_2^{(2)}$$=$$0.1,$ $\epsilon_3$$=$$0.79,$ $(\theta,\gamma_1)$$=$$(0.55,0.21)$ in subfigure $f$  
			}}
			\label{fignew2}
		\end{figure}
	\end{center}
	\section{Discussion}\label{conc}
	A sequential linear $n$-local network has been introduced in our present work. In the preparation stage of such a protocol, the parties are allowed to perform local filtering operations which constitute a specific form of stochastic local operations assisted with classical communication (SLOCC). Keeping analogy with hidden Bell nonlocality, non $n$-locality obtained in such protocols have been referred to as hidden non $n$-locality. Several instances of hidden non $n$-locality are demonstrated. This in turn points to the fact that filtering operations are significant in revealing hidden non $n$-locality. It is also observed that in some situations the sequential framework is more robust against noise than the usual network non $n$- locality.
	
	Interestingly, it is observed that hidden non $n$-locality can be observed even when one of the sources does not distribute entanglement. However same is not the case when one of the sources generates a product state. To this end, one may note that we have used a specific class of local filters which is however considered the most useful form of local filters in the standard Bell scenario \cite{hir1,hir2}. It will be interesting to characterize hidden non $n$-locality considering the general form of local filtering operations. Also, apart from applying local filters, considering other sequential measurement strategies to explore non $n$-locality can also be considered as a potential direction of future research. Besides, we have applied sequential measurement techniques in the linear $n$-local network scenario. It will be interesting to analyze similar techniques in the non-linear $n$-local networks.
	\section*{Acknowledgement}
	Tapaswini Patro would like to acknowledge the support from DST-Inspire fellowship No. DST/INSPIRE Fellowship/2019/IF190357. Nirman Ganguly acknowledges support from the project grant received under the SERB-MATRICS scheme vide file number MTR/2022/000101. We thank the anonymous referees for their insightful suggestions which has helped us to improve the overall quality of the manuscript. 
	\begin{widetext}
		\section*{Appendix}
		\begin{appendix}
			
			We first analyze the upper bound of $n$-local inequality (Eq.(\ref{ineq})) in the sequential $n$-local network.\\
			$\forall j$$=$$1,2,...,n,$ let source $\textbf{S}_i$ generate an arbitrary two qubit state $\rho_{j,j+1}$ (Eq.(\ref{st41})). In the preparation stage of the sequential network (sec.\ref{ress1}) $\textbf{A}_j,(j$$=$$1,2,...,n+1)$  applies local filter of the form given by Eqs.(\ref{fil5},\ref{fil6}).
			It may be noted that local filter $\mathfrak{F}_j$ (Eq.(\ref{fil6})) applied by each of $n-1$ intermediate parties $\textbf{A}_j(j$$=$$2,...,n)$ is of the form:
			\begin{eqnarray}\label{app1}
				\mathfrak{F}_j&=& \mathfrak{F}_j^{(1)}\otimes\mathfrak{F}_j^{(2)}\,\,\textmd{where}\\
				\mathfrak{F}_j^{(k)} &=& \epsilon_j^{(k)} |0\rangle\langle 0|+|1\rangle\langle 1|\,\, \textmd{for}\,k=1,2\,\textmd{and}\,j=2,3,...,n
			\end{eqnarray}
			As discussed in the main text (sec.\ref{ress1}), $n+1$-partite correlations generated at the end of the measurement stage are used to test the  $n$-local inequality (Eq.(\ref{ineq})).\\
			$n$-local inequality (Eq.(\ref{ineq})) is given by:
			\begin{equation}\label{app2}
				\frac{1}{2}\sum_{h=0}^1\sqrt{\textmd{Tr}[f_h(\textbf{M}_0,\textbf{M}_1,\textbf{N}_0,\textbf{N}_1)\rho_{\small{filtered}}]}\leq1
			\end{equation}
			In usual $n$-local network, Eq.(\ref{ineq}) is given by:
			\begin{eqnarray}\label{app2i}
				\frac{1}{2}\sum_{h=0}^1\sqrt{\textmd{Tr}[f_h(\textbf{M}_0,\textbf{M}_1,\textbf{N}_0,\textbf{N}_1)\rho_{\small{initial}}]}\leq 1\nonumber\\
				\frac{1}{2}\sum_{h=0}^1\sqrt{\textmd{Tr}[f_h(\textbf{M}_0,\textbf{M}_1,\textbf{N}_0,\textbf{N}_1)\otimes_{i=1}^n\rho_{i,i+1}]}\leq 1
			\end{eqnarray}
			As discussed in subsec.\ref{pre1}, upper bound (\textbf{B},say) of the above inequality (Eq.(\ref{app2i})), is given by \cite{bilo5}:
			\begin{equation}\label{app2ii}
				\textbf{B}=\sqrt{\Pi_{i=1}^nt_{i1}+\Pi_{i=1}^nt_{i2}},
			\end{equation}
			where $t_{i1},t_{i2}$ denoting largest two singular values of correlation tensor ($T_i$) of $\rho_{i,i+1}\,(i$$=$$1,2,...,n).$
			Now let us analyze the state $\rho_{filtered}$ used in above Eq.(\ref{app2}). As mentioned in sec.\ref{ress1}, $\rho_{\small{filtered}}$ (Eq.(\ref{fil7})) is given by:
			\begin{eqnarray}\label{app3}
				\rho_{filtered} &=& N(\otimes_{j=1}^{n+1}  \mathfrak{F}_{j})\rho_{\small{initial}}(\otimes_{j=1}^{n+1}  \mathfrak{F}_{j})^{\dagger},\,\,\textmd{where}\,N\, \textmd{is given by } Eq.(\ref{fil7})\nonumber\\
				&=&N\otimes_{j=1}^n \rho_{j,j+1}^{'}\,\textmd{where}\nonumber\\
				\rho_{1,2}^{'} &=& (\mathfrak{F}_{1}\otimes\mathfrak{F}_{2}^{(1)})\rho_{1,2}(\mathfrak{F}_{1}\otimes\mathfrak{F}_{2}^{(1)})^{\dagger}\nonumber\\
				\rho_{j,j+1}^{'} &=& (\mathfrak{F}_{j}^{(2)}\otimes\mathfrak{F}_{j+1}^{(1)})\rho_{j,j+1}(\mathfrak{F}_{j}^{(2)}\otimes\mathfrak{F}_{j+1}^{(1)})^{\dagger}\,\,\forall j=2,3,...,n-1\nonumber\\
				\rho_{n,n+1}^{'} &=& (\mathfrak{F}_{n}^{(2)}\otimes\mathfrak{F}_{n+1})\rho_{n,n+1}(\mathfrak{F}_{n}^{(2)}\otimes\mathfrak{F}_{n+1})^{\dagger}
			\end{eqnarray}
			It may be noted that $\forall j$$=$$1,2,...,n,\,\rho^{'}_{j,j+1}$ is unnormalized.  Let $\rho_{j,j+1}^{''}$ denote the normalized state corresponding to $\rho_{j,j+1}^{'}$:
			\begin{equation}\label{extra}
				\rho_{j,j+1}^{''}=N_{j,j+1}\rho_{j,j+1}^{'},
			\end{equation}
			where normalization factor $N_{j,j+1}$ is given by:
			\begin{eqnarray}\label{app4}
				N_{1,2}&=&\frac{1}{\textmd{Tr}[(\mathfrak{F}_{1}\otimes\mathfrak{F}_{2}^{(1)})\rho_{1,2}(\mathfrak{F}_{1}\otimes\mathfrak{F}_{2}^{(1)})^{\dagger}]}
				\nonumber\\
				N_{j,j+1}&=&\frac{1}{\textmd{Tr}[(\mathfrak{F}_{j}^{(2)}\otimes\mathfrak{F}_{j+1}^{(1)})\rho_{j,j+1}
					(\mathfrak{F}_{j}^{(2)}\otimes\mathfrak{F}_{j+1}^{(1)})^{\dagger}]}\,\,\forall j=2,3,...,n-1\nonumber\\
				N_{n,n+1}&=&Tr[ (\mathfrak{F}_{n}^{(2)}\otimes\mathfrak{F}_{n+1})\rho_{n,n+1}(\mathfrak{F}_{n}^{(2)}\otimes\mathfrak{F}_{n+1})^{\dagger}]
			\end{eqnarray}
			Now, Eq.(\ref{app3}) gives:
			\begin{eqnarray}\label{app5}
				\rho_{\small{filtered}}&=& N\otimes_{j=1}^n \rho_{j,j+1}^{'}\nonumber\\
				&=&N\otimes_{j=1}^n\frac{1}{N_{j,j+1}}(N_{j,j+1}\rho_{j,j+1}^{'})\nonumber\\
				&=&(\frac{N}{\otimes_{j=1}^n N_{j,j+1}})\otimes_{j=1}^n\rho_{j,j+1}^{''}\nonumber\\
				&=&\otimes_{j=1}^n\rho_{j,j+1}^{''}\,\textmd{\small{using}}\,Tr[\otimes_{i=1}^n R_i]=\Pi_{i=1}^n Tr[R_i],\,\textmd{\small{for any finite}}\, n
			\end{eqnarray}
			Using Eq.(\ref{app5}), Eq.(\ref{app2}) becomes:
			\begin{equation}\label{app6}
				\frac{1}{2}\sum_{h=0}^1\sqrt{\textmd{Tr}[f_h(\textbf{M}_0,\textbf{M}_1,\textbf{N}_0,\textbf{N}_1)\otimes_{j=1}^n\rho_{j,j+1}^{''}]}\leq1
			\end{equation}
			Comparison of Eq.(\ref{app2i}) with Eq.(\ref{app6}) points out that on maximizing over measurement parameters (used in measurement stage), upper bound ($\textbf{B}_{seq},$say) of above inequality and consequently that of $n$-local inequality (Eq.\ref{ineq}) in sequential $n$-local network is given by Eq.(\ref{app7})
			with $t^{''}_{j1},t^{''}_{j2}$ denoting largest two singular values of correlation tensor ($T^{''}_j$) of $\rho^{''}_{j,j+1}\,(j$$=$$1,2,...,n).$
			It may be noted that $ \textbf{B}_{seq}$ is a function of the Bloch parameters of $\rho_{j,j+1}\forall j$$=$$1,2,,...,n$ and the filtering parameters $\epsilon_1,\epsilon_{n+1},\epsilon_j^{(1)},\epsilon_j^{(2)}(j$$=$$2,3,...,n).$ Using Eq.(\ref{app7}), we next give the proof of the theorem.\\
			\textit{Proof of Theorem.1:} Let one of the $n$ sources generate product of two single qubit mixed states. W.L.O.G. let $\textbf{S}_1$ generate:
			\begin{equation}\label{app8}
				\rho_{1,2}=\frac{1}{4}(\sigma_0+\vec{m}.\vec{\sigma})\otimes (\sigma_0+\vec{n}.\vec{\sigma})\,\,\textmd{where}
			\end{equation}
			$\vec{m},\vec{n}$ are three dimensional vectors with length less than or equal to unity.
			Singular values of correlation tensor of $\rho_{1,2}$ are $(|\vec{m}||\vec{n}|,0,0).$ Magnitude of singular values of any correlation tensor is always less than unity\cite{hallsh} Hence when $\rho_{1,2}$ is used in the usual $n$-local network, $\textbf{B}$$\leq$$1.$ Consequently no violation of Eq.(\ref{ineq}) is obtained. Now, in a sequential $n$-local network, the correlation tensor of $\rho_{1,2}^{''}$ has only one non-zero singular value. So, from Eq.(\ref{app7}), we get $\textbf{B}_{seq}$$\leq$$ 1.$ Consequently violation of Eq.(\ref{ineq}) turns out to be impossible in sequential $n$-local network. Hence, if at least one of the sources generates product of two single-qubit mixed states, non $n$-locality cannot be detected in a sequential $n$-local network for any finite $n.$ This completes the proof of Theorem.1.$\blacksquare$\\
			\\
			\textit{Justification in support of conjecture made in subsec.\ref{ex3}:} As per condition, $n$-local inequality is not violated in usual $n$-local network. Hence, by Eq.(\ref{app2ii}),
			\begin{equation}\label{app9}
				\sqrt{\Pi_{j=1}^nt_{j1}+\Pi_{j=1}^nt_{j2}}\leq 1
			\end{equation}
			Let us focus on any one of the $n$ states $\rho_{j,j+1}(j$$=$$1,2,...,n).$ W.L.O.G.,let us consider $\rho_{1,2}.$ Local bloch vectors of $\rho_{1,2}$ are considered to be null. Singular values of $\rho_{1,2}^{''}$ turn out to be:
			\begin{eqnarray}
				t_{1,1}^{''} &=& \frac{\epsilon_1\epsilon_2^{(1)} t_{11}}{c_1}\nonumber\\
				t_{1,2}^{''} &=& \frac{\epsilon_1\epsilon_2^{(1)} t_{12}}{c_1}\\
				t_{1,3}^{''}&=& \frac{(1-\epsilon_1^2)(1-(\epsilon_2^{(1)})^2)+t_{13}(1+\epsilon_1^2)(1+(\epsilon_2^{(1)})^2)
				}{4c_1}\,\textmd{where}\,\nonumber\\
				c_1&=&t_{13}(1-\epsilon_1^2)(1-(\epsilon_2^{(1)})^2)+(1+\epsilon_1^2)(1+(\epsilon_2^{(1)})^2)
			\end{eqnarray}
			Singular values of $\rho_{j,j+1}^{''}(j$$=$$2,3,...,n)$ have analogous forms. For these forms of singular values, numerical maximization of Eq.(\ref{app7}), under the constraint that Eq.(\ref{app9}) holds, yields $1.$ Consequently Eq.(\ref{ineq}) is not violated in case none of $\rho_{j,j+1}$ has local Bloch vectors.
		\end{appendix}
	\end{widetext}
	

\begin{thebibliography}{1}
		\bibitem{Bel} J. S. Bell, Physics \textbf{1}, 195 (1964); J. S. Bell, Speakable and Unspeakable in Quantum Mechanics (Cambridge University Press, Cambridge,England, 2004), 2nd ed.
		
		\bibitem{P.Zoller} P.Zoller, \textit{et.al.}, The European Physical Journal D  36, 203 (2005).	
		
		\bibitem{brunrev} N. Brunner, D. Cavalcanti, S. Pironio, V. Scarani, and S.
		Wehner, Bell nonlocality, Rev. Mod. Phys. 86, 419 (2014).
		\bibitem{A.Acin} A. Acín, N. Brunner, N. Gisin, S. Massar, S. Pironio, and  V.Scarani,
		Physical Review Letters 98, 230501 (2007).
		\bibitem{random} S.Pironio, \textit{et.al.}, Nature volume \textbf{464},1021-1024 (2010).
		\bibitem{hir1}F. Hirsch: Hidden Nonlocality. Master thesis, University of Geneva (2013), http://cms.unige.ch/sciences/physique/wpcontent/
		uploads/Travail-de- Master.pdf.
		\bibitem{hor} R. Horodecki, P. Horodecki, M. Horodecki, K. Horodecki,\textit{Quantum entanglement}, Rev. Mod. Phys. \textbf{81},865 (2009).
		\bibitem{pop1}S. Popescu, Phys. Rev. Lett. 74, 2619 (1995).
		\bibitem{gis2} N. Gisin, Phys. Lett. A 210, 151 (1996).
		\bibitem{J.F.Clauser} J.F.Clauser, M.A. Horne, A. Shimony,  and R.A. Holt, Physical Review Letters 23, 880 (1969).
		\bibitem{hir2}F. Hirsch, M. T. Quintino, J. Bowles, N. Brunner, Phys. Rev. Lett. 111, 160402 (2013).
		\bibitem{bp1} B. Paul, K. Mukherjee and D. Sarkar	Phys. Rev. A \textbf{94}, 052101 (2016).
		\bibitem{frtz1}T.Fritz, New J. Phys. \textbf{14} 103001 (2012).
		\bibitem{BRA} C. Branciard, N. Gisin, and S. Pironio, Phys. Rev. Lett. \textbf{104},170401 (2010).
		\bibitem{BRAN} C. Branciard,D. Rosset, N. Gisin and S. Pironio, Phys. Rev. A \textbf{85}, 032119 (2012).
		\bibitem{km1}K. Mukherjee, B. Paul and D. Sarkar, Quantum Inf Process. \textbf{14}, 2025 (2015).
		\bibitem{gis1}M.O.Renou,\emph{et al.}, Phys. Rev. Lett. \textbf{123}, 140401 (2019).
		\bibitem{bilo1}  A.Tavakoli, P. Skrzypczyk, D. Cavalcanti, A. Acín, Phys. Rev. A \textbf{90}, 062109 (2014).
		\bibitem{internet1}S.Wehner, D.Elkouss, and R.Hanson, Science \textbf{362} (2018).
		\bibitem{lee}C. M. Lee and M. J. Hoban, Phys. Rev. Lett. \textbf{120}, 020504 (2018).
		\bibitem{internet2} W. Kozlowski, S. Wehner,' Proceedings of the Sixth Annual
		ACM International Conference on Nanoscale Computing and Communication, NANOCOM '19 (Association for Computing Machinery, New York, NY, USA,(2019).
		\bibitem{km2}K. Mukherjee, B. Paul and D. Sarkar, Quantum Inf Process. \textbf{15}, 2895  (2016).
		\bibitem{bilo2} N.Gisin, \emph{et al.,} Phys. Rev.A\textbf{96}, 020304, (2017).
		\bibitem{bilo3} F.Andreoli, \emph{et al.,}, Phys. Rev. A \textbf{95}, 062315 (2017).
		\bibitem{km3}K. Mukherjee, B. Paul and D. Sarkar, Phys. Rev. A \textbf{96}, 022103 (2017).
		\bibitem{bilo4} F. Andreoli, \emph{et al.,} New.J.Phys. \textbf{19}, 113020  (2017).
		\bibitem{km4}K. Mukherjee, B. Paul and D. Sarkar, Quantum Inf Process. \textbf{18}, 212  (2019).
		\bibitem{km5}K. Mukherjee, B. Paul and A.Roy, Phys. Rev. A \textbf{101}, 032328 (2020).
		\bibitem{bilo5}A. Kundu, M.K. Molla, I. Chattopadhyay and D. Sarkar, Phys. Rev. A \textbf{102}, 052222 (2020).
		\bibitem{ejm} A.Tavakoli, C.Branciard and N.Gisin, Phys. Rev. Lett. \textbf{126}, 220401 (2021).
		\bibitem{bilo6} A. P. Kerstjens, N. Gisin and A. Tavakoli, Phys. Rev. Lett. \textbf{128}, 010403 (2022).
		\bibitem{bilo7} I.Supic, J.D.Bancal and N.Brunner, Phys. Rev. Lett. \textbf{125}, 240403 (2020).
		\bibitem{nr4} S.K.Liao, \emph{et al.}, Phys. Rev. Lett. \textbf{120}, 030501 (2018).
		\bibitem{nr1}J. Aberg, R. Nery, C. Duarte, R. Chaves, Phys. Rev. Lett. \textbf{125}, 110505 (2020).
		\bibitem{nr2}E. Wolfe, A. P.Kerstjens, M. Grinberg, D. Rosset, A. Acin, M. Navascues, Phys. Rev. X \textbf{11}, 021043 (2021).
		\bibitem{birev}A. Tavakoli, A. P. Kerstjens, M.X. Luo, M.O. Renou, Rep. Prog. Phys. \textbf{85}, 056001 (2022).
		\bibitem{nr3}K.Hansenne, Z.P. Xu, T. Kraft, O. Guhne, Nature Communications \textbf{13}, 496 (2022).
		\bibitem{km7} K. Mukherjee, I. Chakrabarty, and G. Mylavarapu Phys. Rev. A \textbf{107}, 032404 (2023).
		\bibitem{gam} O.Gamel,\textit{Entangled Bloch spheres: Bloch matrix and two-qubit state space}, Phys. Rev. A. \textbf{93}, 062320 (2016).
		\bibitem{luo} S.Luo,\textit{Quantum discord for two-qubit systems}, Phys. Rev. A. \textbf{77}, 042303 (2008).
		\bibitem{gr1}A. Wojcik, J. Modlawska, A. Grudka, and M. Czechlewski, Phys. Lett. A \textbf{374}, 4831 (2010).
		\bibitem{gr2}W. Klobus, W. Laskowski, M. Markiewicz, and A. Grudka, Phys. Rev. A \textbf{86}, 020302(R) (2012).
		\bibitem{reps} W.Dai, T. Peng and M.Z.Win, IEEE Journal ``\textit{On Selected Areas In Communications}'', \textbf{38}, 3 (2020).
		\bibitem{nie} I. Chuang and M. Nielsen: ``\textit{Quantum information science}'', Cambridge University Press (2000).
		\bibitem{xs1}	S. Rana, P. Parashar, Quantum Info. Process. \textbf{13}, 2815 (2014).
		\bibitem{hallsh}S.Cheng and M. J. W. Hall,  Phys. Rev. Lett. \textbf{118}, 010401 (2017).
	\end{thebibliography}
\end{document}